\documentclass[sigconf]{acmart}

\AtBeginDocument{%
  \providecommand\BibTeX{{%
    \normalfont B\kern-0.5em{\scshape i\kern-0.25em b}\kern-0.8em\TeX}}}


\acmConference[Conference acronym 'XX]{Make sure to enter the correct
  conference title from your rights confirmation emai}{June 03--05,
  2018}{Woodstock, NY}
%
%




\usepackage{adjustbox}
\usepackage{array, boldline, makecell, booktabs}
\usepackage{dsfont}
\usepackage[mathscr]{euscript}
\usepackage{multirow}
\usepackage{subfigure}
\usepackage{caption}
\usepackage{tabularx}
\usepackage[boxed,ruled,commentsnumbered]{algorithm2e}
\usepackage{xcolor}

\newtheorem{prop}{Proposition}

\begin{document}

\title{From Fake News to \#FakeNews: Mining Direct and Indirect Relationships among Hashtags for Fake News Detection}

\author{Xinyi Zhou}
\email{zhouxinyi@data.syr.edu}
\orcid{0000-0002-2388-254X}
\affiliation{%
  \institution{Syracuse University}
  \country{U.S.A}
}

\author{Reza Zafarani}
\email{reza@data.syr.edu}
\affiliation{%
  \institution{Syracuse University}
  \country{U.S.A}
}

\author{Emilio Ferrara}
\email{emiliofe@usc.edu}
\affiliation{%
  \institution{University of Southern California}
  \country{U.S.A}
}

\begin{abstract}
The COVID-19 pandemic has gained worldwide attention and allowed fake news, such as ``COVID-19 is the flu,'' to spread quickly and widely on social media. Combating this coronavirus infodemic demands effective methods to detect fake news. To this end, we propose a method to infer news credibility from hashtags involved in news dissemination on social media, motivated by the tight connection between hashtags and news credibility observed in our empirical analyses. We first introduce a new graph that captures all (direct and \textit{indirect}) relationships among hashtags. Then, a language-independent semi-supervised algorithm is developed to predict fake news based on this constructed graph. This study first investigates the indirect relationship among hashtags; the proposed approach can be extended to any homogeneous graph to capture a comprehensive relationship among nodes. Language independence opens the proposed method to multilingual fake news detection. Experiments conducted on two real-world datasets demonstrate the effectiveness of our approach in identifying fake news, especially at an \textit{early} stage of propagation.
\end{abstract}

\begin{CCSXML}
<ccs2012>
   <concept>
       <concept_id>10003120.10003130</concept_id>
       <concept_desc>Human-centered computing~Collaborative and social computing</concept_desc>
       <concept_significance>500</concept_significance>
       </concept>
 </ccs2012>
\end{CCSXML}

\ccsdesc[500]{Human-centered computing~Collaborative and social computing}

\keywords{Fake news, information credibility, social media}



\maketitle

\begin{figure}[t]
    \centering
    \includegraphics[width=0.85\columnwidth]{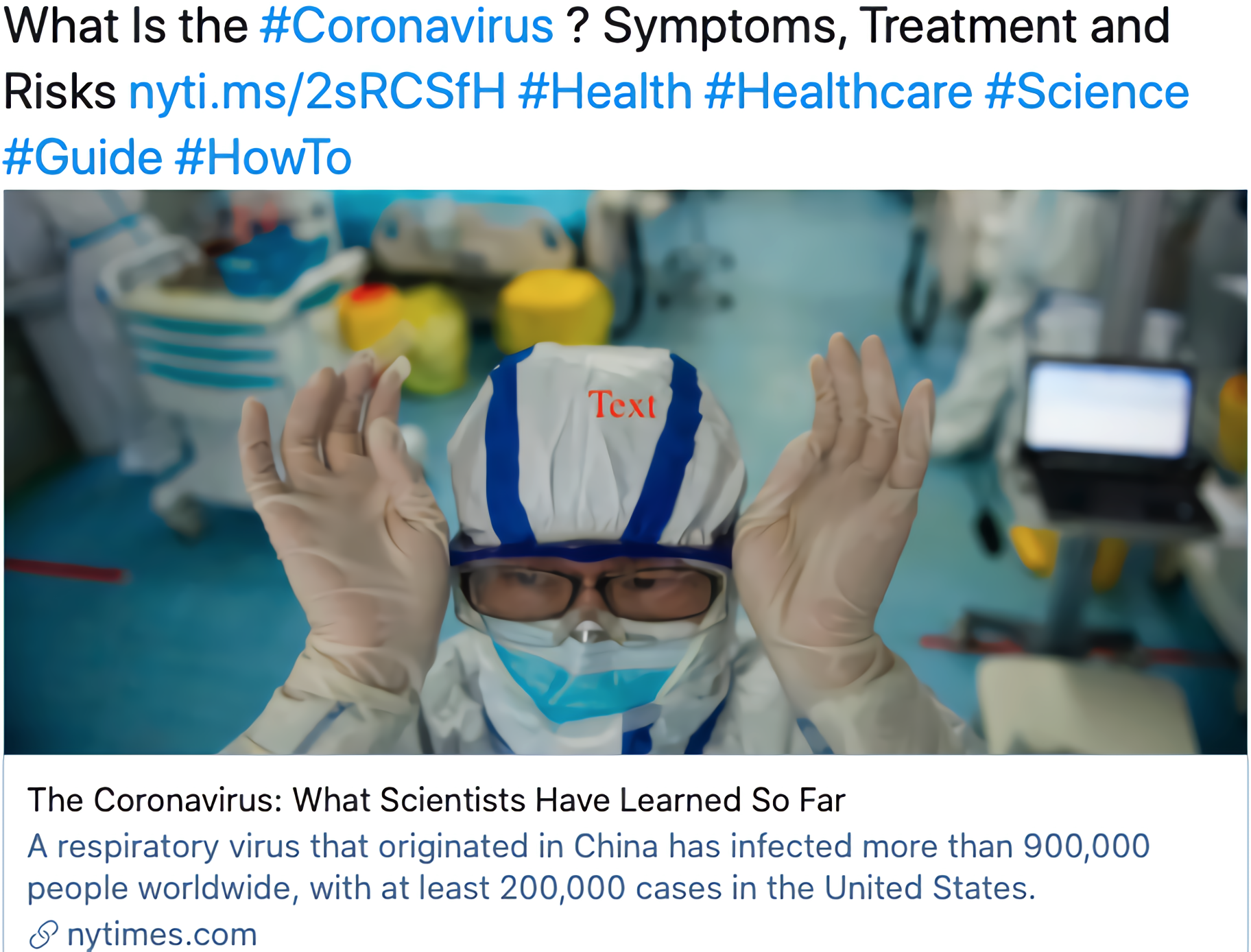}
    \includegraphics[width=0.85\columnwidth]{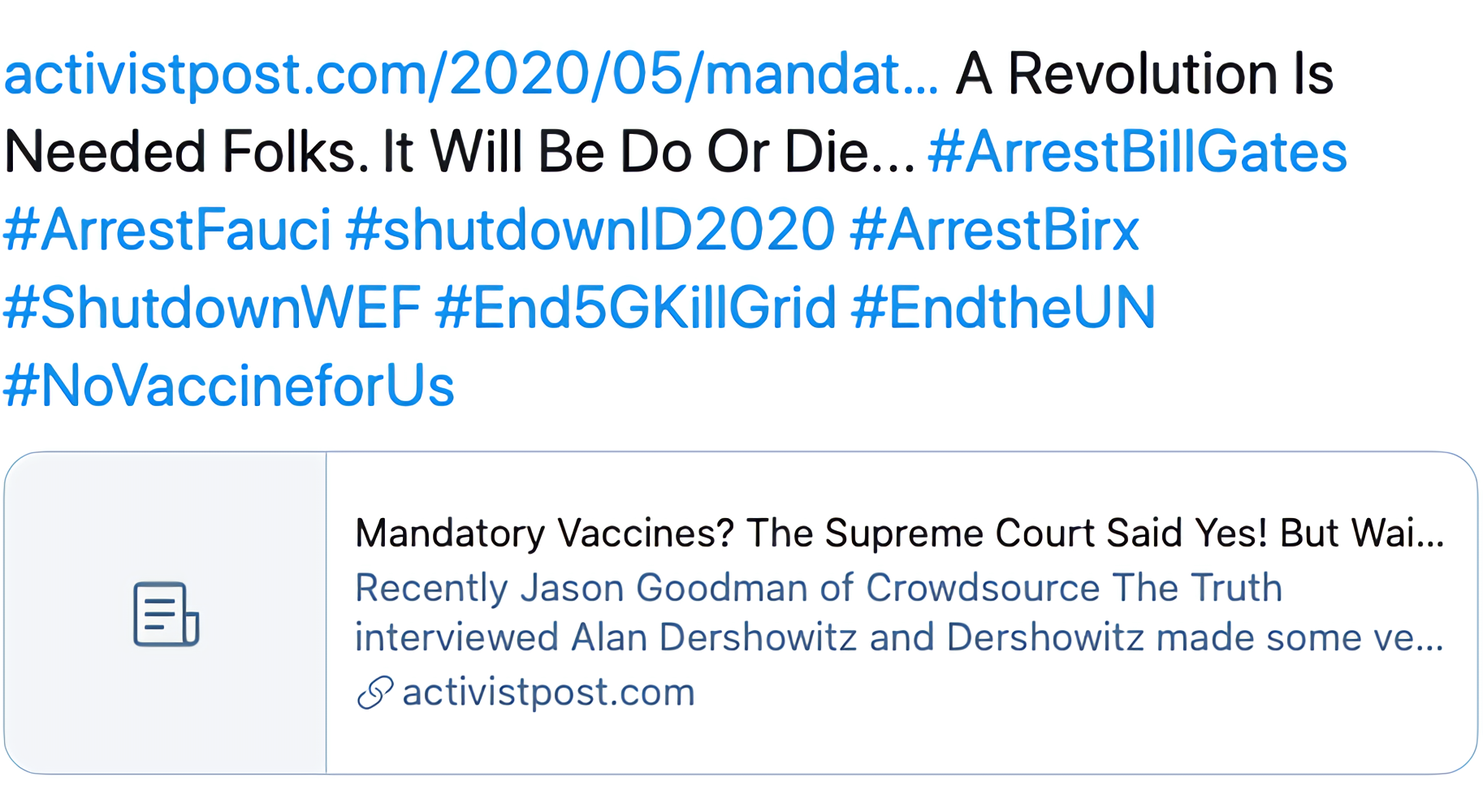} 
    \caption{Examples of User Posts. Hashtags can be a summarization of posts in topics, e.g., \#Coronavirus, \#Healthcare, and \#Science (top), and user opinions, e.g., \#ArrestBillGates \#ArrestFauci, and \#End5GKillGrid (bottom).}
    \label{fig:tweet_examples}
\end{figure}

\section{Introduction}
\label{sec:intro}
Evidence indicates that more than two-thirds of Americans get their news from social media~\cite{matsa2018news}. Such increased social media usage has created favorable conditions for the spread of fake news, which can have a detrimental impact on democracies, economies, and public health. For instance, extensive discussions have been sparked on whether fake news has affected the result of the Brexit vote or the 2016 U.S. presidential election has sparked discussions~\cite{allcott2017social}.
The hoax over the ``dead'' Ethereum founder spurs \$4 billion~\cite{roberts2017hoax}, and over `injured' Barack Obama wiped out \$130 billion in stock value~\cite{matthews2013does}. During the coronavirus pandemic, hundreds of news websites have peddled hoaxes;\footnote{\url{https://www.newsguardtech.com/coronavirus-misinformation-tracking-center/}} individuals who believe that drinking chlorine dioxide can cure or prevent coronavirus might take dangerous action to ``protect'' themselves from the virus. 

Combating fake news not only demands manual fact-checking by domain experts but also the development of algorithms for automatic verification, to which some existing studies have contributed significantly. These studies have investigated the role of news content~\cite{perez2018automatic,potthast2018stylometric,przybyla2020capturing,karimi2019learning,rubin2015towards,wang2018eann,zhang2020fakedetector,rashkin2017truth,bal2020analysing} and 
\textit{propagation information} (how news articles from media organizations or statements from public figures are spread on social media) in predicting fake news. Such propagation information has included post content that often contains users' sentiments or stances~\cite{jin2016news}, post relationships like forwarding and replies~\cite{wu2015false,lukasik2019gaussian,ma2018rumor,zhou2022fake}, user profiles~\cite{cheng2021causal,shu2019role}, and social connections~\cite{wu2018tracing}. 
At times, hashtags have also been utilized by, e.g., looking at whether they exist or how many they are in each user post~\cite{castillo2011information}. Nevertheless, hashtag data are significantly underinvestigated in evaluating news credibility, and the \textit{relationship} among hashtags has been overlooked; their values can be multi-fold.

First, hashtag data is insensitive to privacy and bias issues especially compared to the user profile data. Second, hashtags contain intuitive and plentiful information, such as providing topic and stance (see Figure \ref{fig:tweet_examples} for examples), and can act as an informative summary of user posts. Hence, topic modeling and stance extraction methods introducing additional inference are not required, whereas they are often necessary for the general post content. Such advantages in hashtags have formed the foundation for research on learning representations for posts, where hashtags often act as the ground truth in learning~\cite{dhingra2016tweet2vec}.

Furthermore, we observe a tight connection between news credibility and hashtags in the news spreading. Precisely, fake news significantly corresponds to the hashtag only used to spread fake news (see Figure \ref{fig:news_hashtags_fake}). Likewise, true news primarily uses hashtags only attached to posts spreading true news (see Figure \ref{fig:news_hashtags_true}). Therefore, mining the relationship between news and hashtags can benefit news credibility inference. Mining the (direct and indirect) relationship among hashtags can further help with it.

\begin{figure}[t]
\begin{minipage}{\columnwidth}
    \subfigure[Fake News]{\label{fig:news_hashtags_fake}
        \centering
        \includegraphics[width=0.475\columnwidth]{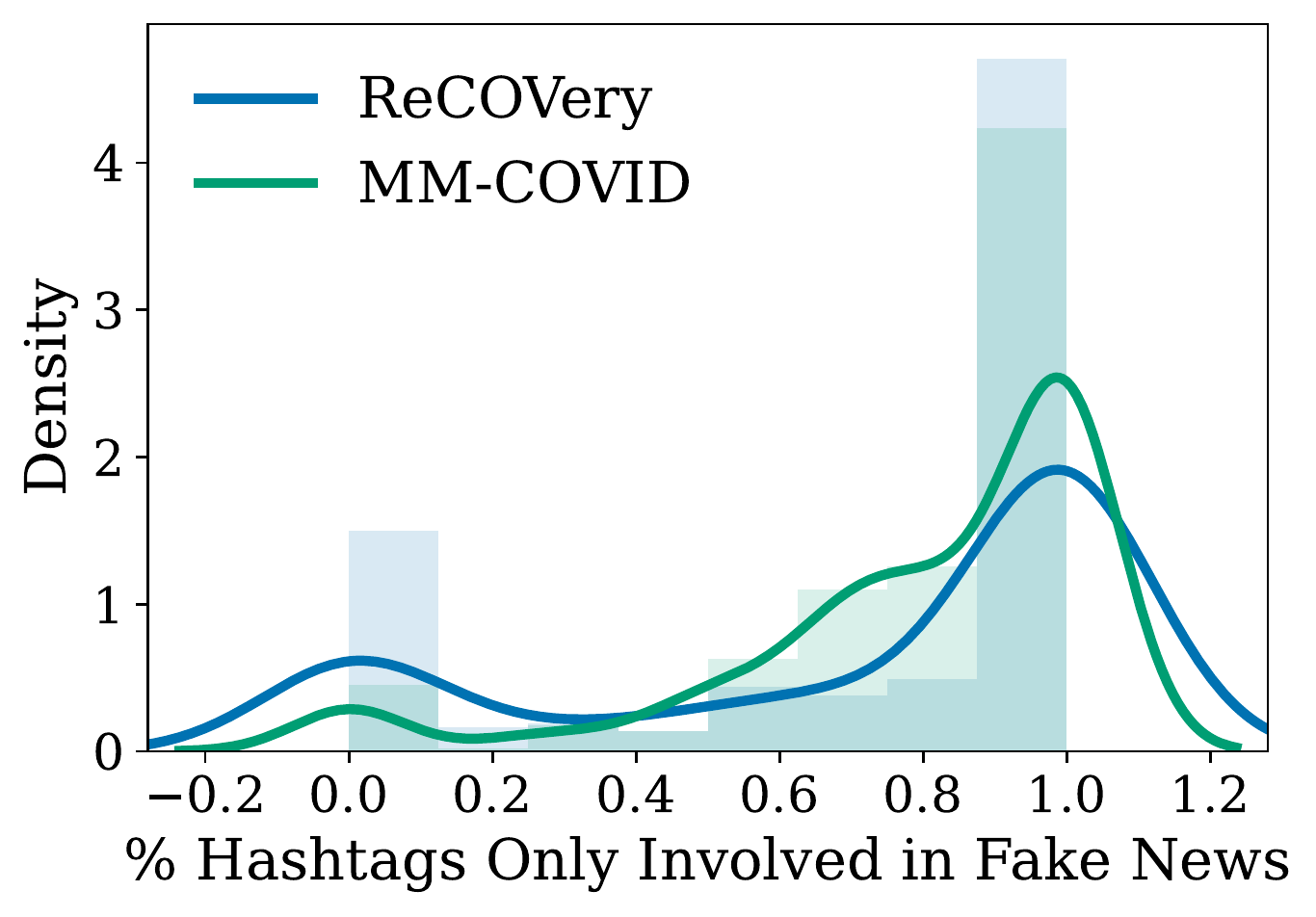}
    }
    \subfigure[True News]{\label{fig:news_hashtags_true}
        \centering
        \includegraphics[width=0.485\columnwidth]{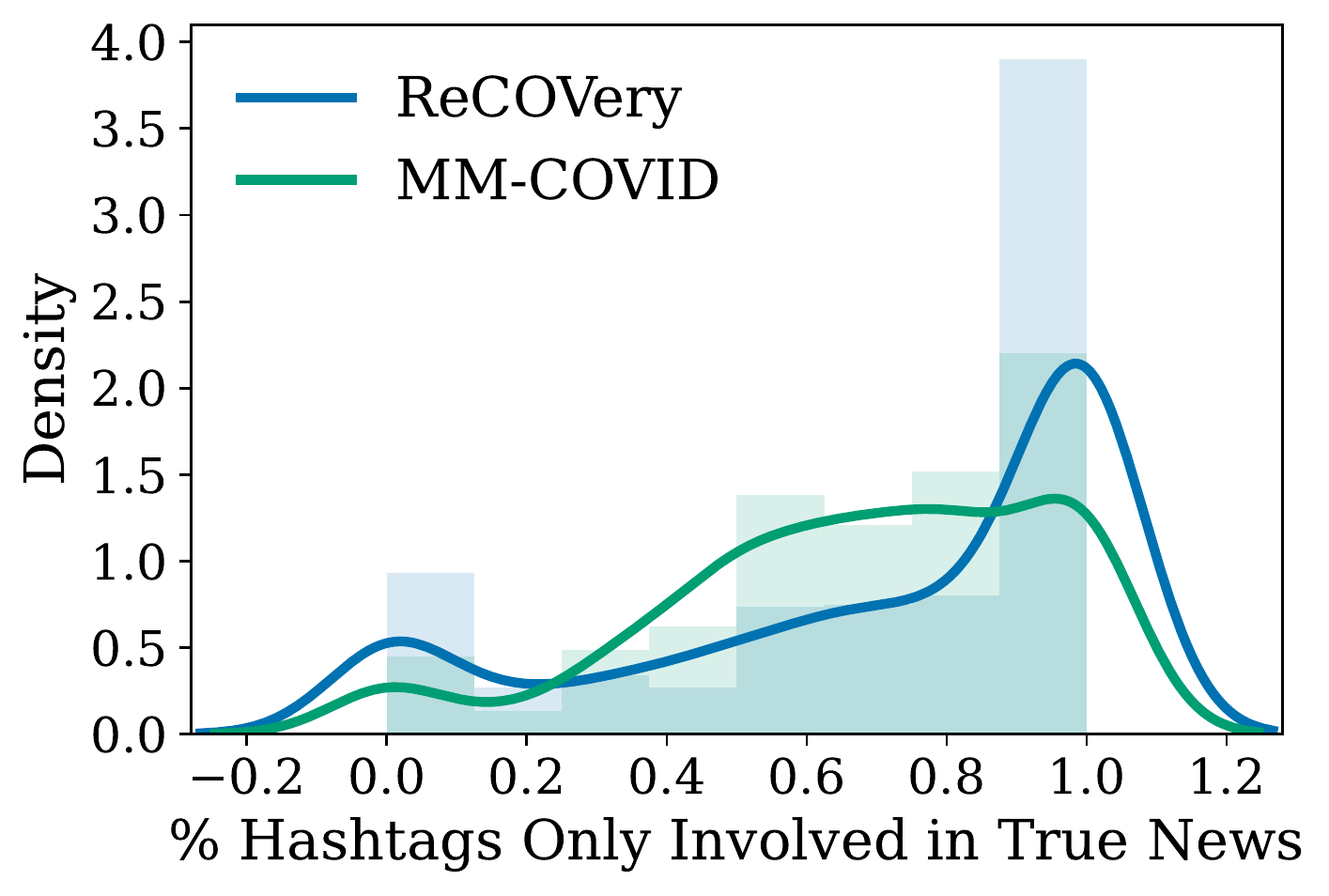}
    }
\end{minipage}
\caption{Connection between News Credibility and Hashtags in News Propagation on Social Media. For each news article or statement, we compute the proportion of hashtags within its corresponding hashtag set involved in spreading either fake or true news. Distributions indicate that (a) fake news significantly corresponds to the hashtag only used to spread fake news. Likewise, (b) true news mainly uses hashtags merely attached to posts spreading true news.}
\label{fig:news_hashtags}
\end{figure}

On the other hand, there are still open issues on how to effectively detect fake news (i) with limited ground-truth data, as news credibility annotation is time-consuming and requires domain knowledge; (ii) in its early dissemination as individuals tend to trust fake news stories more after they have been widely spread (i.e., \textit{validity effect}~\cite{hogg2020social}); and (iii) across languages as fake news has become a universal problem.

We aim to address these open issues in automatic fake news detection by incorporating valuable hashtag information in news credibility evaluation. To this end, we propose a method to identify fake news by mining a newly proposed graph that captures all relationships among hashtags, including direct and \textit{indirect}. Then, a semi-supervised algorithm is developed to predict fake news based on the constructed graph; the method is independent of news languages. The overall contributions of this work are summarized as follows:
\begin{enumerate}
    \item We first use hashtags as a strong and informative proxy to detect fake news. The proposed approach uses an iterative solution to compute direct and indirect relationships among hashtags, which can be extended to any homogeneous graph to capture a comprehensive relationship among nodes.

    \item We develop a language-independent semi-supervised algorithm. The algorithm allows to predict fake news using the limited number of labeled multilingual news data.
    
    \item We conduct extensive experiments based on two real-world datasets. The results demonstrate the effectiveness of the proposed method in identifying fake news, especially at an early stage of propagation. 
\end{enumerate}

The rest of the paper is organized as follows. We introduce and evaluate the proposed method in Sections \ref{sec:method}--\ref{sec:experiment}, review related work in Section \ref{sec:review}, and conclude with future directions in Section \ref{sec:conclusion}.

\begin{figure*}[t]
    \centering
    \includegraphics[width=\textwidth]{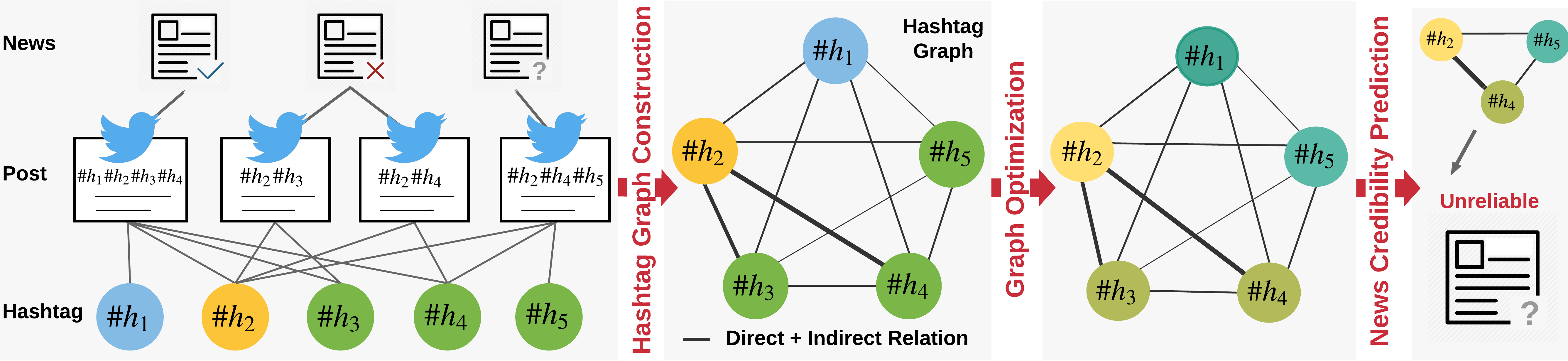}
    \caption{Method Overview.}
    \label{fig:overview}
\end{figure*}

\section{Methodology}
\label{sec:method}

We specify the proposed method by first introducing a novel graph that captures all (direct and \textit{indirect}) relationships among hashtags involved in news propagation on social media (see Section \ref{subsec:hashtag_graph}). Then, we detail the proposed algorithm based on the constructed graph (see Section \ref{subsec:cred_propagation}).
Figure \ref{fig:overview} provides an overview of the proposed method.

\vspace{1em}
\noindent\textbf{Problem Definition and Notations}\quad
Given $m$ news articles or statements $A = \{ a_1, a_2, \cdots,$ $a_{m'}, a_{m'+1}, \cdots, a_m \}$, the credibility of $m'$ of them is available and denoted as $Y = \{y_1, y_2, \cdots, y_{m'}: y_i\in \{-1,1\},  i\leq m' \}$, where $-1$ indicates fake news and 1 denotes true news. For each news article or statement $a_i \in A$, suppose it has been shared by a total of $p_i$ posts, each of which contains a hashtag set $H^i_j, j\leq p_i$. 
Let $H = \{ h_1, h_2, \cdots, h_q\}$ be the overall hashtag set for $m$ news articles or statements, where $H = \bigcup_{
i=1}^{m}\bigcup_{j=1}^{p_i} H^i_j$. Our goal is to predict the credibility of news articles or statements $a_{m'+1}, \cdots, a_m$.

\subsection{All-relations Hashtag Graph}
\label{subsec:hashtag_graph}

Let $G = (V, E, W)$ be the proposed hashtag graph, where $V = H$ (i.e., the set of distinct hashtags for $m$ news articles or statements) is the node set. Two nodes are connected if and only if the hashtags represented by these two nodes are used within the same post, i.e., $(h_k, h_l)\in E~\text{\textit{iff.}}~\{h_k,h_l\}\subseteq H_j^i$. The weight of an edge between two nodes captures the co-occurrence among their corresponding hashtags, equivalent to the number of hashtag co-occurrences among all news posts. Formally, 
\begin{equation}
\label{eq:graph_weight}
w_{kl} = \sum_{i=1}^{m}\sum_{j=1}^{p_i}\mathcal{I}(\{h_k,h_l\}\subseteq H_j^i),
\end{equation}
where $\mathcal{I}(*) = 1$ if $*$ is true; otherwise, $\mathcal{I}(*) = 0$.

However, the hashtag graph defined thus far can only capture the \textit{direct} relation between each pair of hashtags. Intuitively, if $h_a$ and $h_b$ appear in the same post, and $h_b$ and $h_c$ simultaneously appear in the other one, $h_a$ and $h_c$ should share an \textit{indirect} relation to some extent; however, in the above-defined graph, it is possible that $(h_a,h_c)\notin E$. To address this issue, we develop the following method to capture all (direct+indirect) relations among hashtags.

Let $\mathbf{W}_{\textup{dir}} = [w_{kl}]_{k,l=1}^{q}$ denote the direct relation matrix of $G$. Note that $\mathbf{W}_{\textup{dir}} \geq 0$ and $\mathbf{W}_{\textup{dir}} = \mathbf{W}_{\textup{dir}}^\top$. We use $\mathbf{W}_{\textup{all}}$ to denote the all-relations matrix developed from $\mathbf{W}_{\textup{dir}}$, whose entries capture direct and indirect relations between two hashtags. $\mathbf{W}_{\textup{all}}$ is computed by 
\begin{equation}
\label{eq:totoal_relation_matrix}
    \begin{aligned}
    \mathbf{W}_{\textup{all}}  & = \lim_{k_1 \rightarrow \infty } (\mathbf{N}+\mathbf{N}^2+\cdots+\mathbf{N}^{k_1})  \\ 
     & =  \lim_{k_1 \rightarrow \infty } \mathbf{N} (\mathbf{I}-\mathbf{N}^{k_1})(\mathbf{I}-\mathbf{N})^{-1} \\ 
     & = \mathbf{N} (\mathbf{I}-\mathbf{N})^{-1},
    \end{aligned}
\end{equation}
where $\mathbf{I}$ is an identity matrix, and
\begin{equation}\label{eq:normalization}
\mathbf{N} = \mathbf{W}_{\textup{dir}} / \max (\sum_l w_{kl})
\end{equation}
to guarantee convergence. Note that consistent with the properties of $\mathbf{W}_{\textup{dir}}$, $\mathbf{W}_{\textup{all}} \geq 0$ and $\mathbf{W}_{\textup{all}} = \mathbf{W}_{\textup{all}}^\top$.

\subsection{Fake News Detection}
\label{subsec:cred_propagation}

This section introduces the proposed language-independent semi-supervised algorithm for fake news detection. Generally speaking, our task is to infer the credibility of news articles or statements $a_{m'+1}, a_{m'+2}, \cdots, a_m$ from $a_1, a_2, \cdots, a_{m'}$ whose credibility has been available and denoted as $y_1, y_2, \cdots, y_{m'}$. We use hashtags involved in news spreading and their (direct and indirect) relationships as an intermediary for this inference. 

Precisely, for each hashtag $h_k$, it initially corresponds to a credibility distribution of its involved news articles or statements whose credibility has been available. The distribution has a weighted average credibility score, which we compute as 
\begin{equation}
\label{eq:initial_cred}
\mathbf{c_0}_k = \sum_{i=1}^{m'} \sum_{j=1}^{p_i} y_i \mathcal{I}(h_k \in H_j^i) \big/ \sum_{i=1}^{m'} \sum_{j=1}^{p_i} \mathcal{I}(h_k \in H_j^i),
\end{equation}
The score $\mathbf{c_0}_k\in [-1,1]$ can be seen as the pseudo-ground-truth credibility of the hashtag $h_k$ obtained by the subset of news articles or statements whose credibility has been available. $\mathbf{c_0}_k=-1$ indicates $h_k$ is thus far only involved in the spread of unreliable news, while $\mathbf{c_0}_k=1$ indicates that it thus far only corresponds to reliable news. It is possible that for some of the hashtags in $H$, their involved news credibility is unknown to us at this stage, which we initially set as zero. 

Next, we optimize these initial ``credibility'' scores of hashtags based on the structure of the constructed all-relations hashtag graph. Assuming that two hashtags having a closer (direct and indirect) relationship share a more similar ``credibility'' score, we define our cost function as follows~\cite{zhou2004learning}.
\begin{equation}
\label{eq:cred_propagation}
    \begin{aligned} 
    \mathcal{L}(\mathbf{c}) & = \mu \sum\limits_{k=1}^{q} \sum\limits_{l=1}^{q} \mathbf{W}_{\textup{all}_{kl}}(\frac{\mathbf{c}_k}{\sqrt{\mathbf{D}_{kk}}} - \frac{\mathbf{c}_l}{\sqrt{\mathbf{D}_{ll}}})^2 + (1-\mu) \sum\limits_{k=1}^{q} (\mathbf{c}_k - \mathbf{c_{0}}_k)^2 \\
    & = \mu \left \|\mathbf{W}_{\textup{all}} \circ \mathbf{Y} \right \|_1 + (1-\mu) \left \| \mathbf{c} - \mathbf{c_0} \right \|_2,
    \end{aligned}
\end{equation}
where 
$\mathbf{Y} = (\mathbf{D}^{-1/2}\mathbf{C}^\top - \mathbf{C}\mathbf{D}^{-1/2})^{\circ 2}$, 
$\mathbf{C} = \mathbf{1}\mathbf{c}^\top$, 
$\mathbf{D}$ is a diagonal matrix with $\mathbf{D}_{kk} = \sum_{l} \mathbf{W}_{\textup{all}_{kl}}$, and 
$\mu \in (0, 1)$ is a regularization parameter.

Note that the cost function $\mathcal{L}$ is convex for $\mathbf{W}_{\textup{all}} \geq 0$. Therefore, the optimized hashtag ``credibility'', which we aim to keep consistent with the hashtag relationship, is estimated by
\begin{equation}
\begin{array}{rcl}
    \mathbf{\hat{c}} = \arg \min\limits_{\mathbf{c}}\mathcal{L}(\mathbf{c}) 
    & \Leftrightarrow & \frac{\partial \mathcal{L}}{\partial \mathbf{c}}\big|_{\mathbf{c}=\mathbf{\hat{c}}} = 0 \\
    & \Leftrightarrow & \mu \mathbf{\hat{c}}(\mathbf{E-X}) + (1-\mu)(\mathbf{\hat{c}-\mathbf{c_0}}) = 0 \\
    & \Leftrightarrow & \mathbf{\hat{c}} = (1-\mu) (\mathbf{E}-\mu\mathbf{X})^{-1}\mathbf{c_0},
\end{array}
\end{equation}
where $\mathbf{X} = \mathbf{D}^{-1/2}\mathbf{W}_{\textup{all}}\mathbf{D}^{-1/2}$.

To avoid the computation of matrix inversion whose exact solution might not exist, we can get $\mathbf{\hat{c}}$ iteratively with the following update rule (see the proof in Proposition~\ref{prop:iteration}):
\begin{equation}
\label{eq:k2}
    \mathbf{c}(k_2) = \mu\mathbf{X}\mathbf{c}(k_2-1) + (1-\mu)\mathbf{c_0},
\end{equation}
where $k_2=1,2,\cdots$ and $\mathbf{c}(0) = \mathbf{c_0}$.

\begin{prop}
\label{prop:iteration}
$\mathbf{\hat{c}} = \lim\limits_{k_2 \rightarrow \infty } \mathbf{c}(k_2)$.
\end{prop}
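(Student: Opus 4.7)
The plan is to unroll the recursion into a closed form, then identify it as the limit of a Neumann (geometric) series for $(\mathbf{E}-\mu\mathbf{X})^{-1}$. Starting from $\mathbf{c}(0) = \mathbf{c_0}$ and applying the update rule repeatedly gives, by an easy induction,
\begin{equation*}
\mathbf{c}(k_2) = (\mu\mathbf{X})^{k_2}\mathbf{c_0} + (1-\mu)\sum_{j=0}^{k_2-1}(\mu\mathbf{X})^j\,\mathbf{c_0}.
\end{equation*}
Thus the entire question reduces to: (i) does $(\mu\mathbf{X})^{k_2} \to \mathbf{0}$, and (ii) does $\sum_{j=0}^{k_2-1}(\mu\mathbf{X})^j \to (\mathbf{E}-\mu\mathbf{X})^{-1}$?

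Both claims will follow once I establish that the spectral radius of $\mu\mathbf{X}$ is strictly less than one. Here I would use that $\mathbf{X} = \mathbf{D}^{-1/2}\mathbf{W}_{\textup{all}}\mathbf{D}^{-1/2}$ is the symmetric normalization of a non-negative symmetric matrix $\mathbf{W}_{\textup{all}}$ (a property proved right after Eq.~(\ref{eq:totoal_relation_matrix})), with $\mathbf{D}_{kk} = \sum_l \mathbf{W}_{\textup{all}_{kl}}$. Hence $\mathbf{X}$ is similar to the row-stochastic matrix $\mathbf{D}^{-1}\mathbf{W}_{\textup{all}}$ (via $\mathbf{D}^{1/2} \mathbf{X} \mathbf{D}^{-1/2} = \mathbf{D}^{-1}\mathbf{W}_{\textup{all}}$), so its eigenvalues lie in $[-1,1]$. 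Since $\mu \in (0,1)$, the spectral radius of $\mu\mathbf{X}$ is at most $\mu < 1$.

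With this spectral bound in hand, standard matrix-analysis results close the argument: $(\mu\mathbf{X})^{k_2} \to \mathbf{0}$, and the Neumann series $\sum_{j=0}^{\infty}(\mu\mathbf{X})^j$ converges to $(\mathbf{E}-\mu\mathbf{X})^{-1}$ (in particular, the inverse exists, which also retroactively justifies the earlier closed-form expression for $\mathbf{\hat{c}}$). Taking $k_2 \to \infty$ in the closed form gives
\begin{equation*}
\lim_{k_2 \to \infty} \mathbf{c}(k_2) = (1-\mu)(\mathbf{E}-\mu\mathbf{X})^{-1}\mathbf{c_0} = \mathbf{\hat{c}},
\end{equation*}
which is the claim.

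The main obstacle is the spectral-radius step; everything else is bookkeeping. One subtlety to handle carefully is the possible existence of isolated hashtags (zero-degree nodes making $\mathbf{D}$ singular), which would be addressed by restricting $\mathbf{X}$ to the support of $\mathbf{D}$ or by conventionally setting $\mathbf{D}_{kk}^{-1/2}=0$ when $\mathbf{D}_{kk}=0$; after this adjustment the similarity argument with the stochastic matrix $\mathbf{D}^{-1}\mathbf{W}_{\textup{all}}$ goes through unchanged, and the eigenvalue bound $|\lambda(\mathbf{X})| \le 1$ still holds.
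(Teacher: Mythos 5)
Your proposal is correct and follows essentially the same route as the paper's proof: unroll the recursion into $\mathbf{c}(k_2) = (\mu\mathbf{X})^{k_2}\mathbf{c_0} + (1-\mu)\sum_{j=0}^{k_2-1}(\mu\mathbf{X})^{j}\mathbf{c_0}$, recognize the geometric (Neumann) series for $(\mathbf{E}-\mu\mathbf{X})^{-1}$, and let $k_2\to\infty$; you additionally make explicit the spectral-radius bound $\rho(\mu\mathbf{X})\le\mu<1$ that the paper leaves implicit. One harmless slip: $\mathbf{D}^{1/2}\mathbf{X}\mathbf{D}^{-1/2}=\mathbf{W}_{\textup{all}}\mathbf{D}^{-1}$ (the column-stochastic transpose of $\mathbf{D}^{-1}\mathbf{W}_{\textup{all}}$, obtained instead via $\mathbf{D}^{-1/2}\mathbf{X}\mathbf{D}^{1/2}$), but either similar matrix has spectrum in $[-1,1]$, so your bound and conclusion stand.
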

\begin{proof}
$\lim\limits_{k_2 \rightarrow \infty } \mathbf{c}(k_2)
 = (\mu\mathbf{X})^{k_2}\mathbf{c}_0 + (1-\mu) \mathbf{c}_0 (\mathbf{E} - (\mu \mathbf{X})^{k_2})(\mathbf{E}-\mu \mathbf{X})^{-1} 
 = (1-\mu) (\mathbf{E} - \mu \mathbf{X})^{-1} \mathbf{c_0} = \mathbf{\hat{c}}$
\end{proof}

Finally, for each news article or statement $a_i (m'+1\leq i \leq m)$ whose credibility has not been available, we can estimate its credibility, denoted as $\hat{y}_i$, based on the credibility distribution of hashtags involved in its spread on social media. Formally, 
\begin{equation}
\label{eq:y_pred}
    \hat{y}_i = \left\{\begin{array}{rl}
    1, & \text{if}~\sum\limits_{j=1}^{p_i} \sum\limits_{k=1}^{q} \mathbf{\hat{c}}_k \mathcal{I}(h_k \in H_j^i)>0; \\ 
    -1, & \text{otherwise},
\end{array}\right.
\end{equation}
$\hat{y}_i=1$ indicates that news $a_i$ is predicted as true news and $\hat{y}_i=-1$ means the news is predicted to be fake.

\vspace{2mm}
We summarize the method, named NewsTag, in Algorithm \ref{alg:newshat}.

\begin{algorithm}[t]
\LinesNumbered
\caption{NewsTag}
\label{alg:newshat}
\KwIn{$A=\{a_i\}_{i=1}^{m}$, $Y=\{y_{i_1}\}_{i_1=1}^{m'}$, $\{H_j^i \}_{i,j=1}^{i=m, j=p_i}$, $H=\{h_k\}_{k=1}^{q}$ and $\mu$, where $y_{i_1}\in\{-1,1\}$, $m'<m$, $H=\bigcup_{
i=1}^{m}\bigcup_{j=1}^{p_i} H^i_j$, and $\mu\in(0,1)$}
\KwOut{$\{\hat{y}_{i_2}\}_{i_2=m'+1}^{m}$, where $\hat{y}_{i_2}\in\{-1,1\}$}

\# Construct the hashtag graph $G=\{H,E,\mathbf{W}_{\textup{all}}\}$\;  
\ForEach{k,l}{
    $e_{kl}\in E$ \text{iff} $\{h_k,h_l\}\subseteq H_j^i$\;
    $\mathbf{W}_{\textup{dir}_{kl}} = \sum_{i=1}^{m}\sum_{j=1}^{p_i}\mathcal{I}(\{h_k,h_l\}\subseteq H_j^i)$\;}
    
$\mathbf{N} = \mathbf{W}_{\textup{dir}} / \max ([\sum_{l} w_{kl}]_{k=1}^q)$ (Eq. (\ref{eq:normalization})); \# Normalization\;
Initialize $\mathbf{W}_{\textup{all}} = [\mathbf{0}]_{q\times q}$ and $\mathbf{N'} = \mathbf{N}$\;
\While{not convergence}{
    $\mathbf{N'} \leftarrow \mathbf{N'}\times \mathbf{N}$\;
    $\mathbf{W}_{\textup{all}} \leftarrow \mathbf{W}_{\textup{all}} + \mathbf{N'}$\;
}

\# Predict news credibility based on the graph\;
Initialize $\mathbf{D}=[\mathbf{0}]_{q\times q}$\;
\ForEach{k}{
    {$\mathbf{c_0}_k = \sum_{i_1=1}^{m'} \sum_{j=1}^{p_{i_1}} y_{i_1} \mathcal{I}(h_k \in H_j^{i_1}) \big/ \sum_{i_1=1}^{m'} \sum_{j=1}^{p_{i_1}} \mathcal{I}(h_k \in H_j^{i_1})$\;}}
    $\mathbf{D}_{kk} = \sum_{l} \mathbf{W}_{\textup{all}_{kl}}$\;
$\mathbf{X} = \mathbf{D}^{-1/2}\mathbf{W}_{\textup{all}}\mathbf{D}^{-1/2}$\;

Initialize $\mathbf{c}=\mathbf{c_0}$\;
\While{not convergence}{
    $\mathbf{c} \leftarrow \mu\mathbf{X}\mathbf{c} + (1-\mu)\mathbf{c_0}$\;
}

\ForEach{$i_2$}{
    $\hat{y}_{i_2} = 1$ if $\sum_{j=1}^{p_{i_2}} \sum_{k=1}^{q} \mathbf{c}_k \mathcal{I}(h_k \in H_j^{i_2})>0$ else $=-1$\;} 
    
\Return{$\{\hat{y}_{i_2}\}_{i_2=m'+1}^{m}$}
\end{algorithm} 

\section{Experiments}
\label{sec:experiment}

Our experimental setup is introduced in Section \ref{subsec:setup}, followed by the experimental results in Section \ref{subsec:results}. 

\subsection{Experimental Setup}
\label{subsec:setup}

We detail the data used in the experiments (see Section \ref{subsubsec:datasets}), the baselines that the proposed method is compared to (see Section \ref{subsubsec:baselines}), and the implementation details (see Section \ref{subsubsec:implementation_details}).

\subsubsection{Datasets}
\label{subsubsec:datasets}

Experiments are conducted on the ReCOVery~\cite{zhou2020recovery} and MM-COVID~\cite{li2020mm} datasets. Both datasets collect verified news statements or articles about the coronavirus (i.e., content information) and their dissemination on Twitter (social context information). ReCOVery data focus on English news articles while corresponding tweets can be non-English. MM-COVID data contain multilingual statements of news.
Data statistics are in Table \ref{tab:data_statistics}.

\begin{table}[t]
    \centering
    \captionof{table}{Data Statistics.}
    \label{tab:data_statistics}
    \begin{tabular}{lrr}
    \toprule[1pt]
                                 & \textbf{ReCOVery} & \textbf{MM-COVID} \\ \midrule[0.5pt]
    \textbf{\# News Articles}             & 1,768             & 1,054\\
    \quad \textbf{- Reliable News}        & 1,231             & 473  \\
    \quad \textbf{- Unreliable News}      & 537               & 581  \\
    \textbf{\# Distinct Hashtags ($|V|$)} & 3,433             & 7,352\\
    \textbf{\# Hashtag-Hashtag ($|E|$)}   & 5,923             & 30,847\\
    \textbf{\# Tweets}                   & 45,702            & 143,988\\
    \bottomrule[1pt]
    \end{tabular}
\end{table}

\subsubsection{Baselines}
\label{subsubsec:baselines}

We compare the proposed method (NewsTag) with six baselines that can be categorized into three groups:

\paragraph{I. Content-based methods} These are LIWC~\cite{pennebaker2015development} and RST~\cite{ji2014representation}. Both approaches examine the role of \textit{news content} in predicting fake news. Differently, LIWC represents the writing style of news articles or statements at the semantic level, while RST is a discourse-level method.

\paragraph{II. Propagation-based methods} To compare the role of hashtags (i.e., the proposed approach) with that of \textit{user posts} in predicting fake news, we involve two methods (i.e., T2VC and T2VW)~\cite{dhingra2016tweet2vec}, which develop a Bi-GRU encoder to learn particularly the representation of user tweets that often contain informal expressions, spelling errors, and special characters. Differently, the embedding of tweets in T2VC is at the character level, while T2VW is a word-level method.

\paragraph{III. Variants of the proposed method} 
To validate the indirect relationship among hashtags, we compare the proposed method with NewsTag$\backslash$I, which stands for the NewsTag method that merely considers the direct relationship among hashtags. Meanwhile, we involve NewsTag$\backslash$W as one baseline, which stands for the unweighted NewsTag method. This unweighted version includes the unweighted graph and, correspondingly, simplified Eq.~(\ref{eq:initial_cred}) (as $\mathbf{c_0}_k = \sum_i y_i \mathcal{I}(h_k \in H_i) \big/ \sum_i \mathcal{I}(h_k \in H_i)$) and Eq.~(\ref{eq:y_pred}) (as $\hat{y}_i = 1~\text{if}~\sum_{k} \mathbf{\hat{c}}_k$ $\mathcal{I}(h_k \in H_i)>0$, otherwise -1), where $H_i = \bigcup_j H^i_j$.

\subsubsection{Implementation Details}
\label{subsubsec:implementation_details}

News articles or statements in each dataset are randomly grouped as training data (80\%) and testing data (20\%). Macro F1 and micro F1 scores are used to evaluate method performance. 
For the proposed method, we use grid-search by varying $\mu$'s value from 0 to 1 with a step size of 0.1 to get the best performance, which is finally set as 0.4. We set $k_1 = 10$ and $k_2 = 5$, where both solutions converge (see Figure \ref{fig:convergence}).
For baselines, we experiment with various classifiers (decision trees, logistic regression, $k$-NN, random forest,  na\"ive Bayes, and SVM) with default hyperparameters; we present the best-performing one.

\begin{figure}[t]
    \subfigure[ReCOVery]{
        \includegraphics[width=0.48\columnwidth]{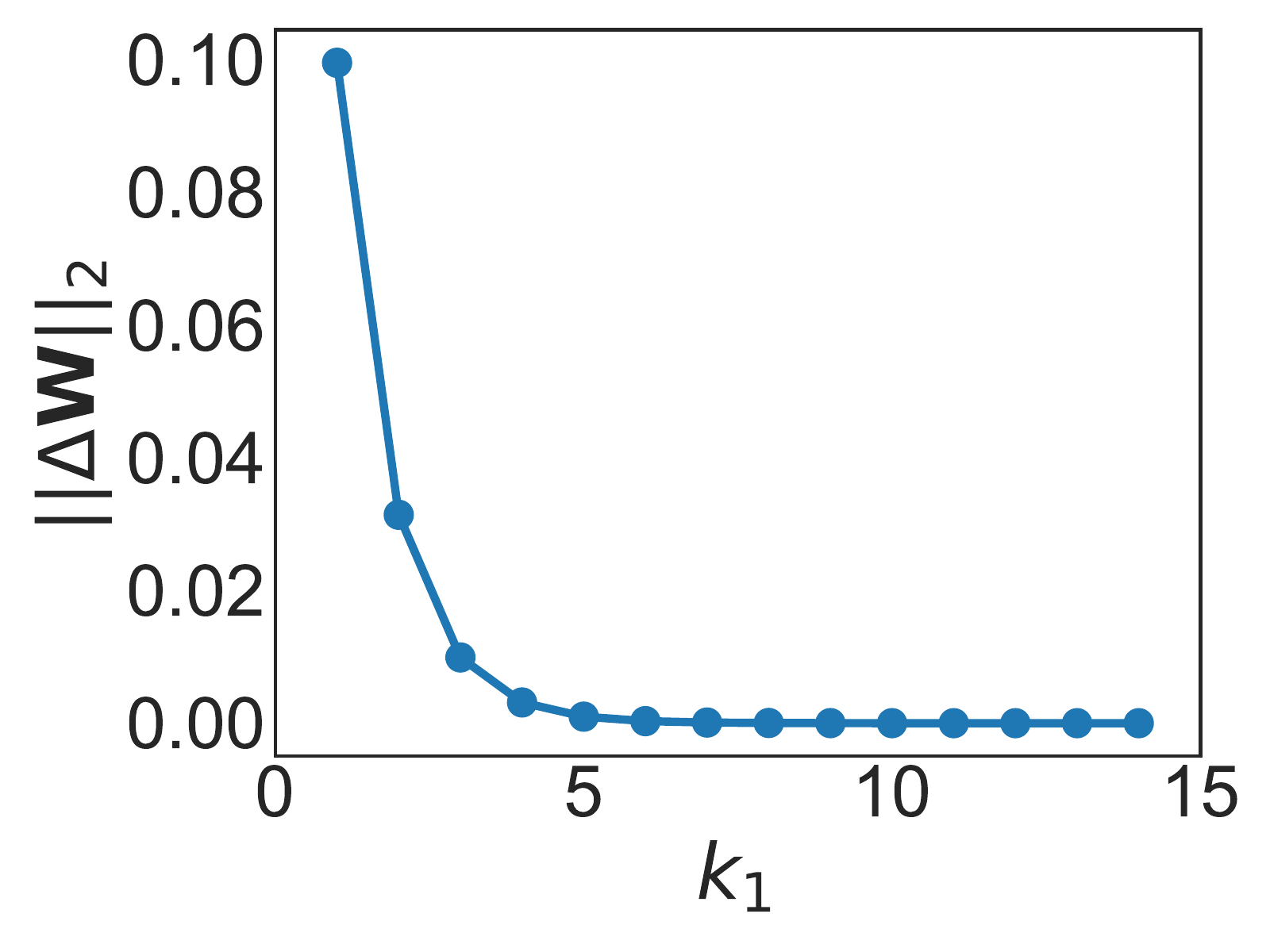}
        \includegraphics[width=0.49\columnwidth]{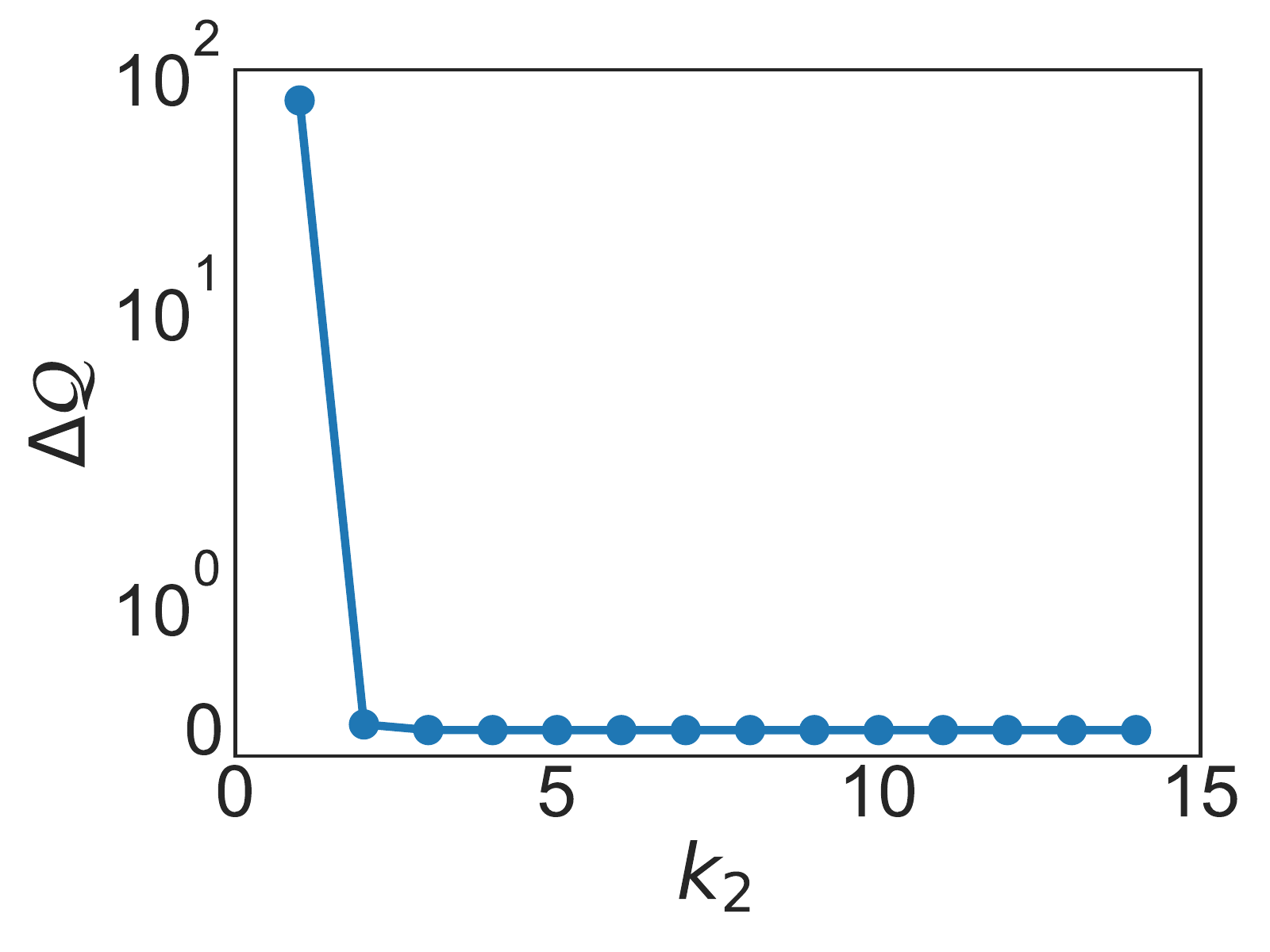}
    }
    \subfigure[MM-COVID]{
        \includegraphics[width=0.48\columnwidth]{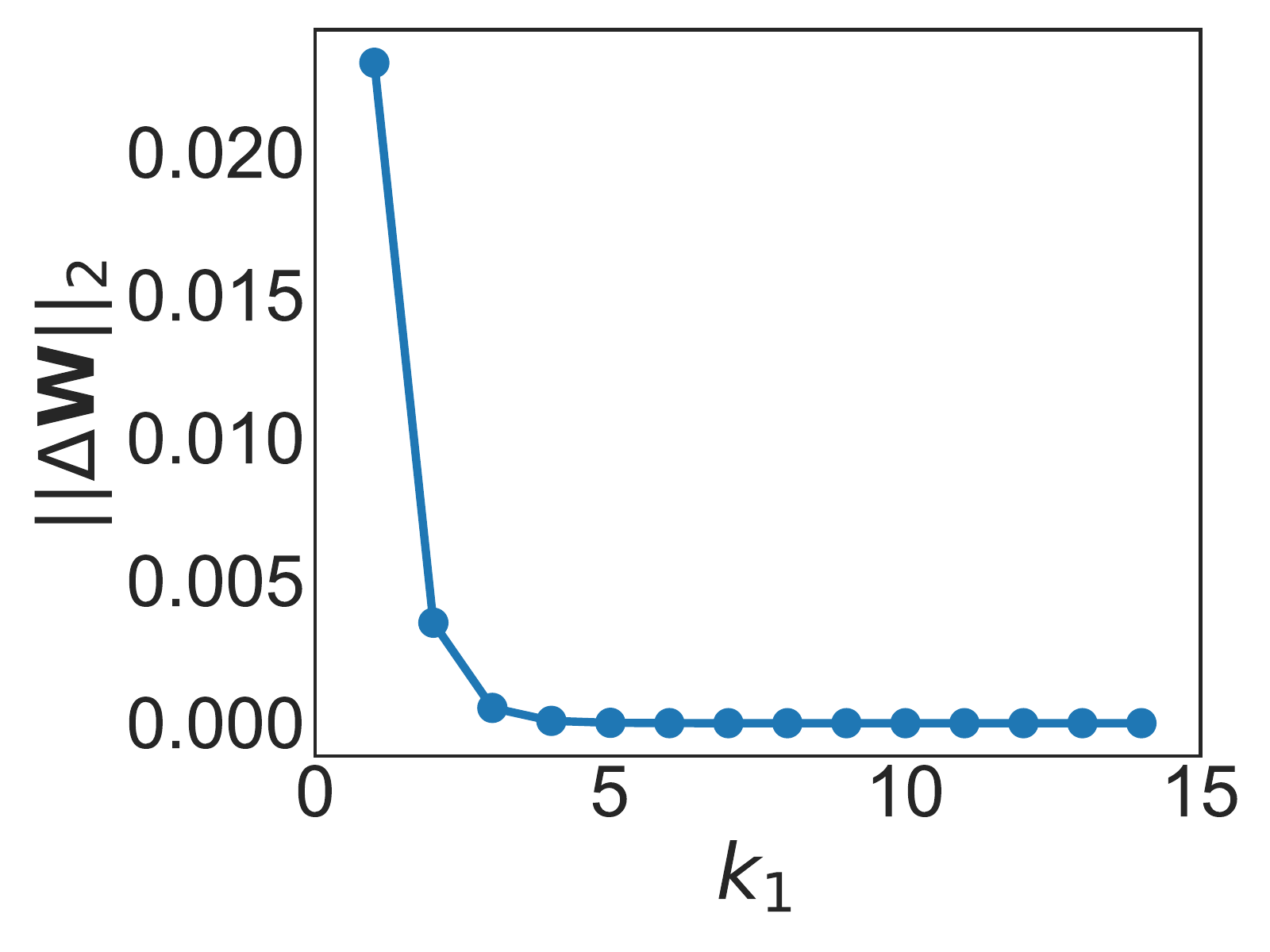}
        \includegraphics[width=0.49\columnwidth]{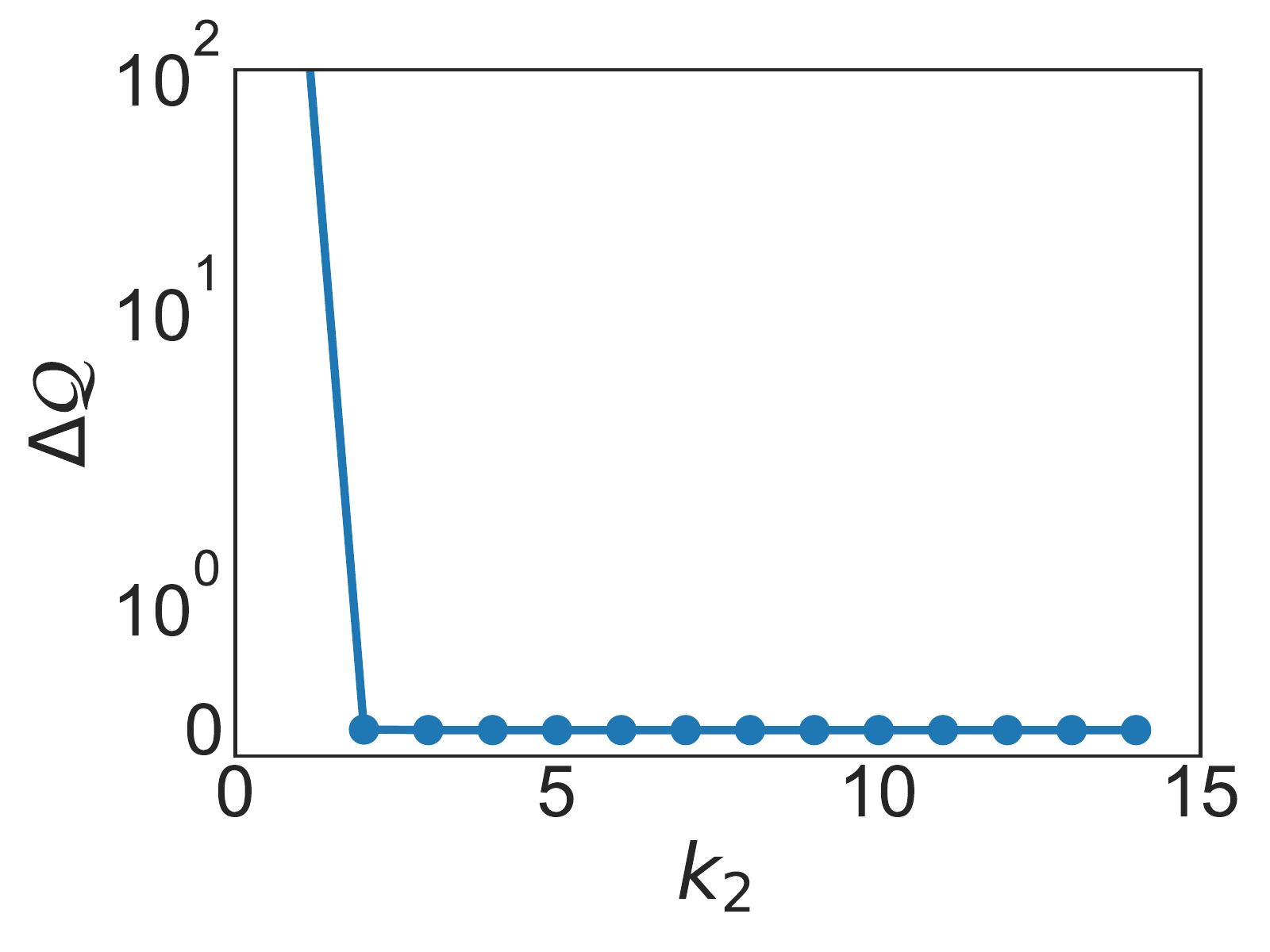}
    }    
    \captionof{figure}{Convergence Analysis.}
    \label{fig:convergence}
\end{figure}

\subsection{Experimental Results}
\label{subsec:results}

The proposed method is evaluated from the following three perspectives by comparing with our baselines: the general performance of methods (see Section \ref{subsubsec:general_performance}),  the performance of methods for fake news early detection (see Section \ref{subsubsec:early_detection}), and a case study (see Section \ref{subsubsec:case_study}).

\subsubsection{General Performance}
\label{subsubsec:general_performance}

Table~\ref{tab:general_performance} provides the results of the proposed method and baselines for predicting fake news. Results indicate that the method performance is $\text{NewsTag} > \text{NewsTag}\backslash\text{I} \approx \text{NewsTag}\backslash\text{W} > \text{T2VC} \approx \text{T2VW} > \text{LIWC} \approx \text{RST}$. The proposed method predicts fake news with around 0.73 macro F1 and 0.81 micro F1 scores based on ReCOVery data. For the MM-COVID dataset, the proposed method's macro and micro F1 scores are $\sim$0.85. For content-based methods, LIWC and RST, the proposed method can significantly outperform them by improving around 20\% on the macro and micro F1 scores on the ReCOVery dataset and over 15\% on both F1 scores on the MM-COVID dataset. Compared to propagation-based methods (T2VC and T2VW) that exploit post content in predicting fake news, the proposed method exploiting hashtag information can increase macro and micro F1 scores by around 10\% with ReCOVery data and 8\% with MM-COVID data. Compared to its two variants, the proposed method improves $\sim$3\% on both F1 scores in two datasets.

\vspace{1em}
\noindent \textbf{An Explanation for Method Performance}\quad
We explore the possible explanation for the better performance of the proposed approach, NewsTag, compared to its variants, NewsTag$\backslash$I and News- Tag$\backslash$W. For NewsTag$\backslash$I, it is comparatively intuitive that NewsTag helps improve upon NewsTag$\backslash$I by mining indirect relationships among hashtags on top of their direct relations. As a result, the hashtag graph becomes denser with cumulative weights and provides more information to infer news credibility (see Figure \ref{fig:evolutional_hashtag_graph} for an illustration). On the other hand, as NewsTag, compared to NewsTag$\backslash$W, further involves the number of tweets related to each news article (i.e., news popularity on social media) in inference, the question arises if such popularity differs between true news and fake news. To answer this question, we record the overall popularity of each news article or statement within fixed hours after it was published and draw the boxplot for true and fake news (see Figure \ref{fig:news_patterns}), respectively. Results indicate that the popularity of true news is generally higher than fake news for both ReCOVery and MM-COVID data.

\begin{table}[t]
\centering
\caption{Method Performance in Fake News Detection.}
\label{tab:general_performance}
\begin{adjustbox}{max width=\columnwidth}
\begin{tabular}{lcccc}
\toprule[1pt]
\multirow{2}{*}{\textbf{}} & \multicolumn{2}{c}{\textbf{ReCOVery}} & \multicolumn{2}{c}{\textbf{MM-COVID}} \\ \cline{2-5}
 & \textbf{Macro F1} & \textbf{Micro F1} & \textbf{Macro F1} & \textbf{Micro F1} \\ \midrule[0.5pt]
\textbf{LIWC} & .500 $\pm$.034 & .612 $\pm$.032 & .704 $\pm$.033 & .707 $\pm$.032 \\
\textbf{RST} & .514 $\pm$.035 & .626 $\pm$.030 & .529 $\pm$.031 & .608 $\pm$.028 \\
\textbf{T2VW} & .619 $\pm$.035 & .709 $\pm$.032 & .781 $\pm$.034 & .782 $\pm$.034 \\
\textbf{T2VC} & .622 $\pm$.033 & .712 $\pm$.027 & .769 $\pm$.031 & .771 $\pm$.031 \\
\textbf{NewsTag\textbackslash{}W} & .673 $\pm$.036 & .772 $\pm$.027 & .835 $\pm$.019 & .836 $\pm$.018 \\
\textbf{NewsTag\textbackslash{}I} & .699 $\pm$.024 & .772 $\pm$.026 & .827 $\pm$.032 & .828 $\pm$.032 \\
\textbf{NewsTag} & \textbf{.727 $\pm$.022} & \textbf{.806 $\pm$.016} & \textbf{.854 $\pm$.027} & \textbf{.855 $\pm$.027} \\
\bottomrule[1pt]
\end{tabular}
\end{adjustbox}
\end{table}

\begin{figure}[t]
    \centering
    \begin{minipage}{\columnwidth}
    \subfigure[$k_1=1$]{
    \begin{minipage}{0.48\columnwidth}
    \includegraphics[width=\columnwidth]{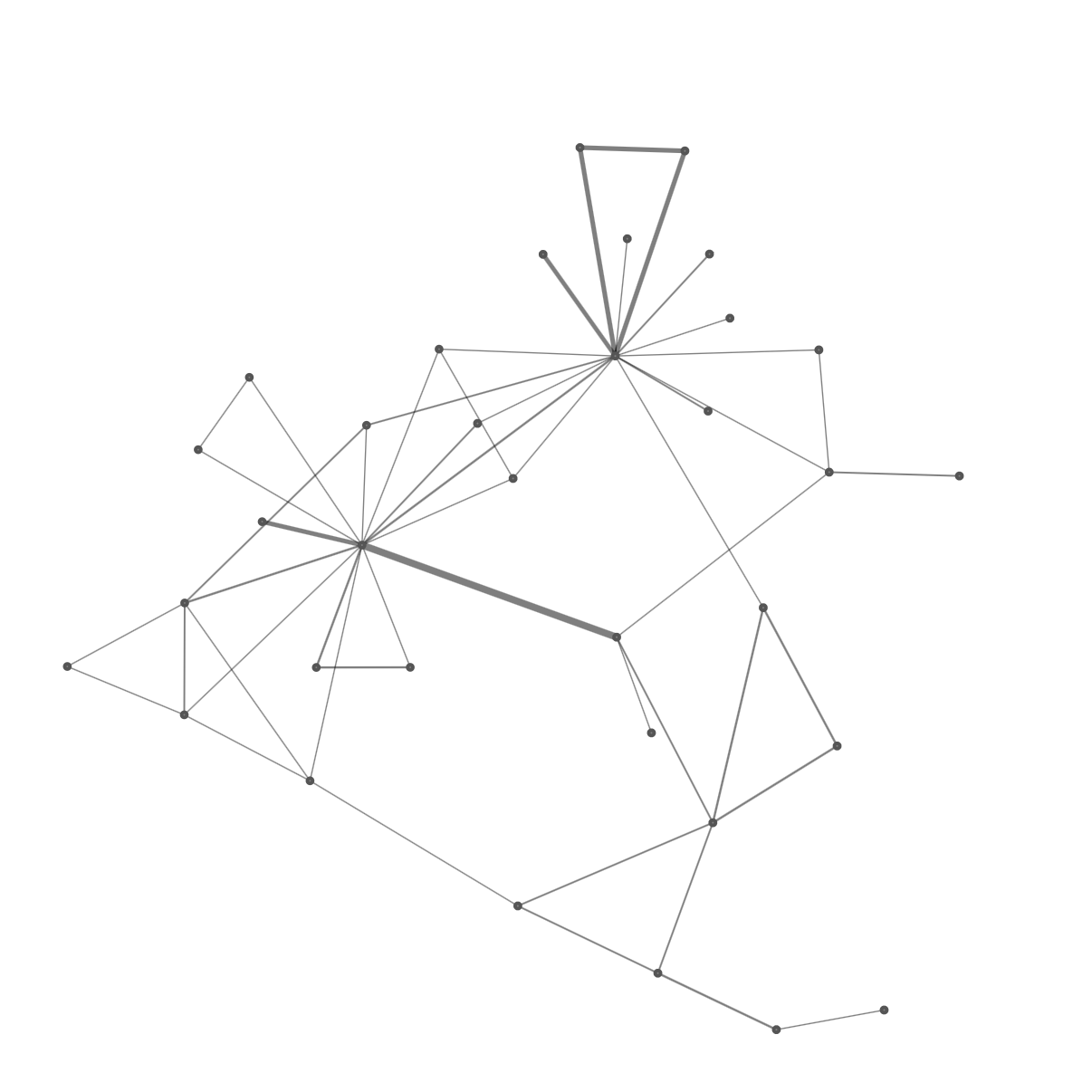}
    \end{minipage}}
    \subfigure[$k_1=2$]{
    \begin{minipage}{0.48\columnwidth}
    \includegraphics[width=\columnwidth]{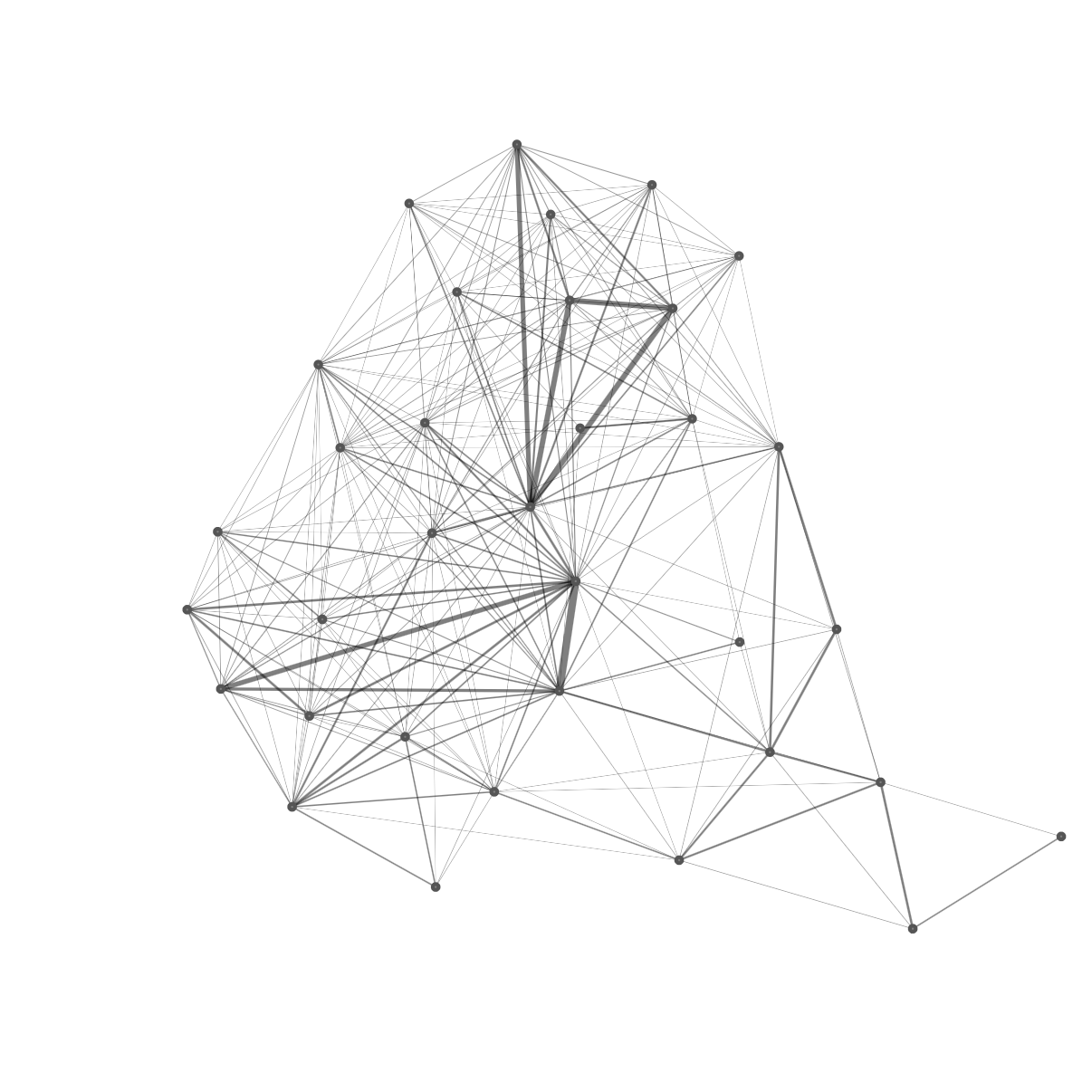}
    \end{minipage}}
    \end{minipage}
    \caption{An Illustrated Graph. (a) only contains direct relation among nodes and (b) includes direct and indirect relationships among the same set of nodes.}
    \label{fig:evolutional_hashtag_graph}
\end{figure}

\begin{figure*}[t]
    \centering
    \includegraphics[width=0.96\columnwidth]{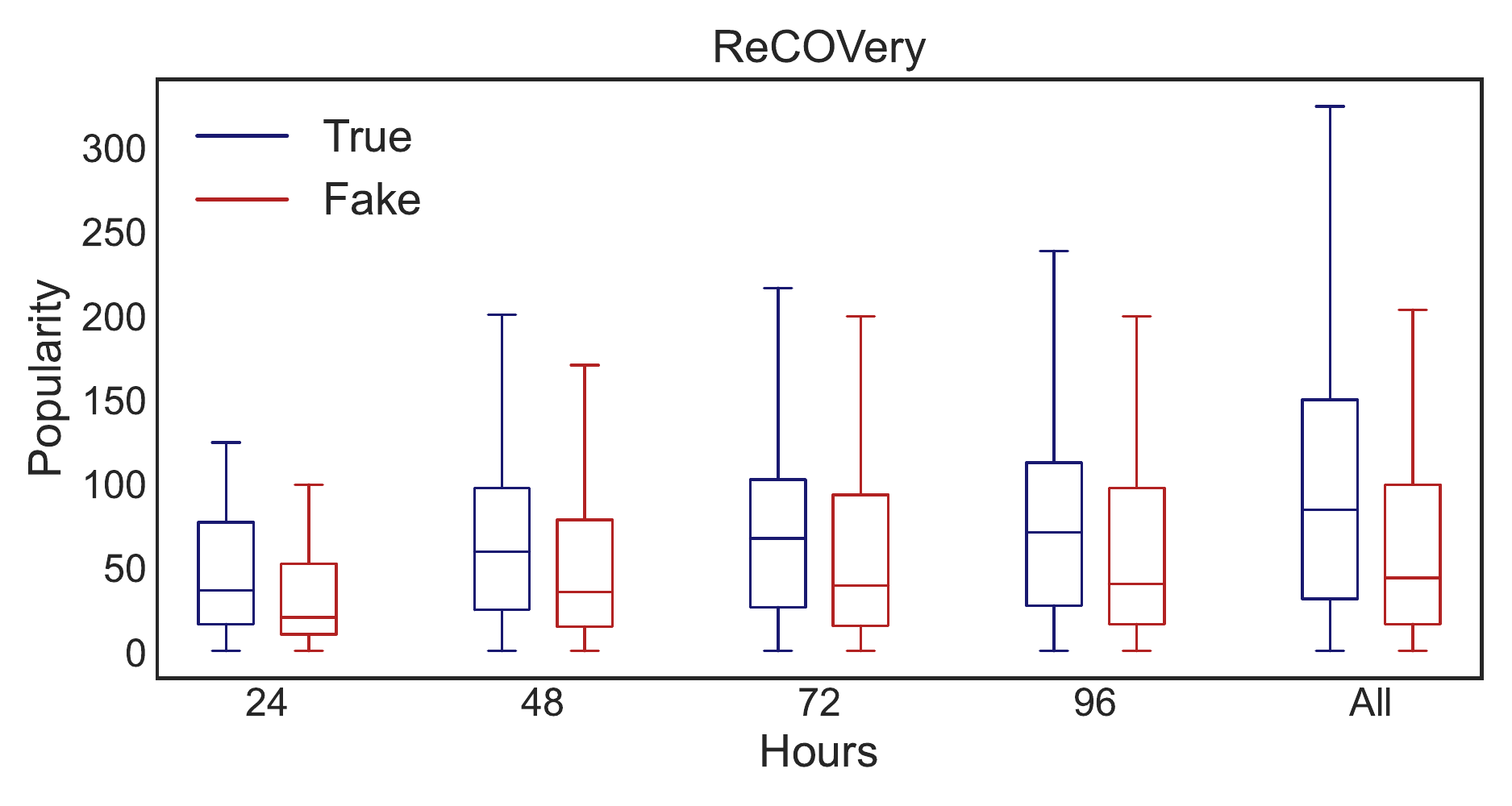}\qquad
    \includegraphics[width=0.96\columnwidth]{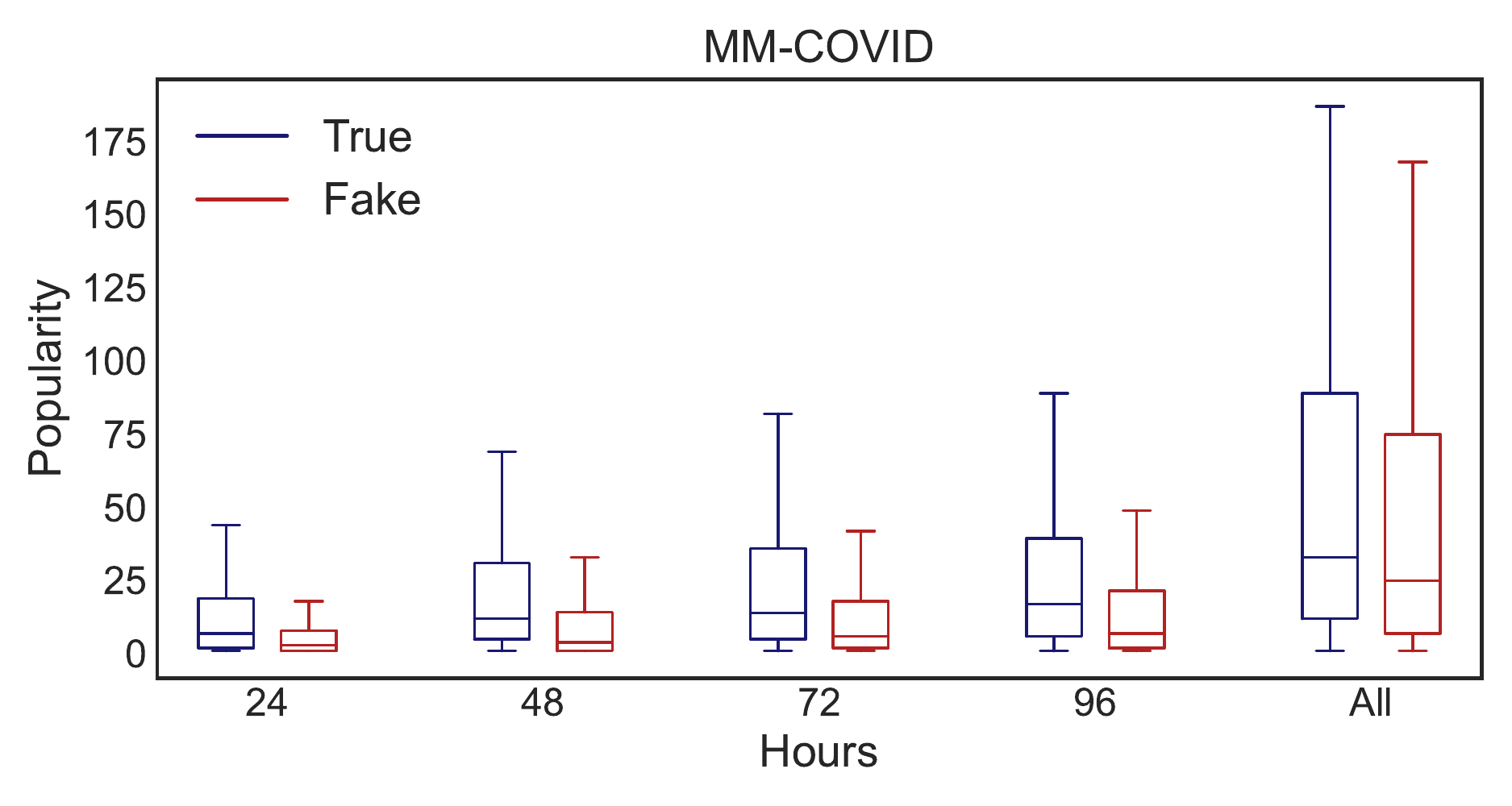}
    \caption{Differential Propagation of Fake and True News. We record the overall popularity of each news article or statement within fixed hours after it was published. Results indicate that the popularity of true news is higher than that of fake news.}
    \label{fig:news_patterns}
\end{figure*}

\begin{figure}
    \subfigure[ReCOVery]{
        \centering
        \includegraphics[width=0.5\columnwidth]{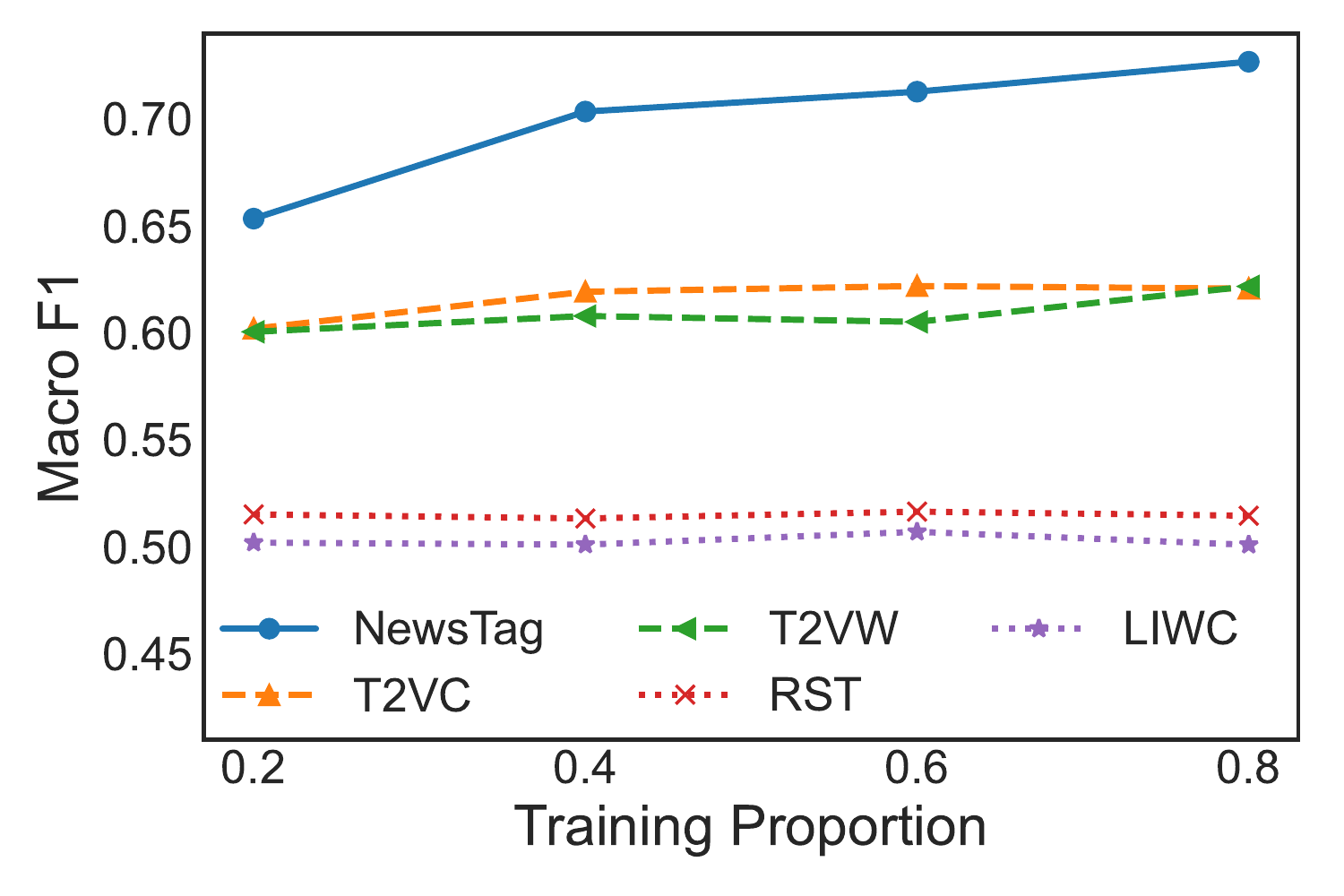}
        \includegraphics[width=0.5\columnwidth]{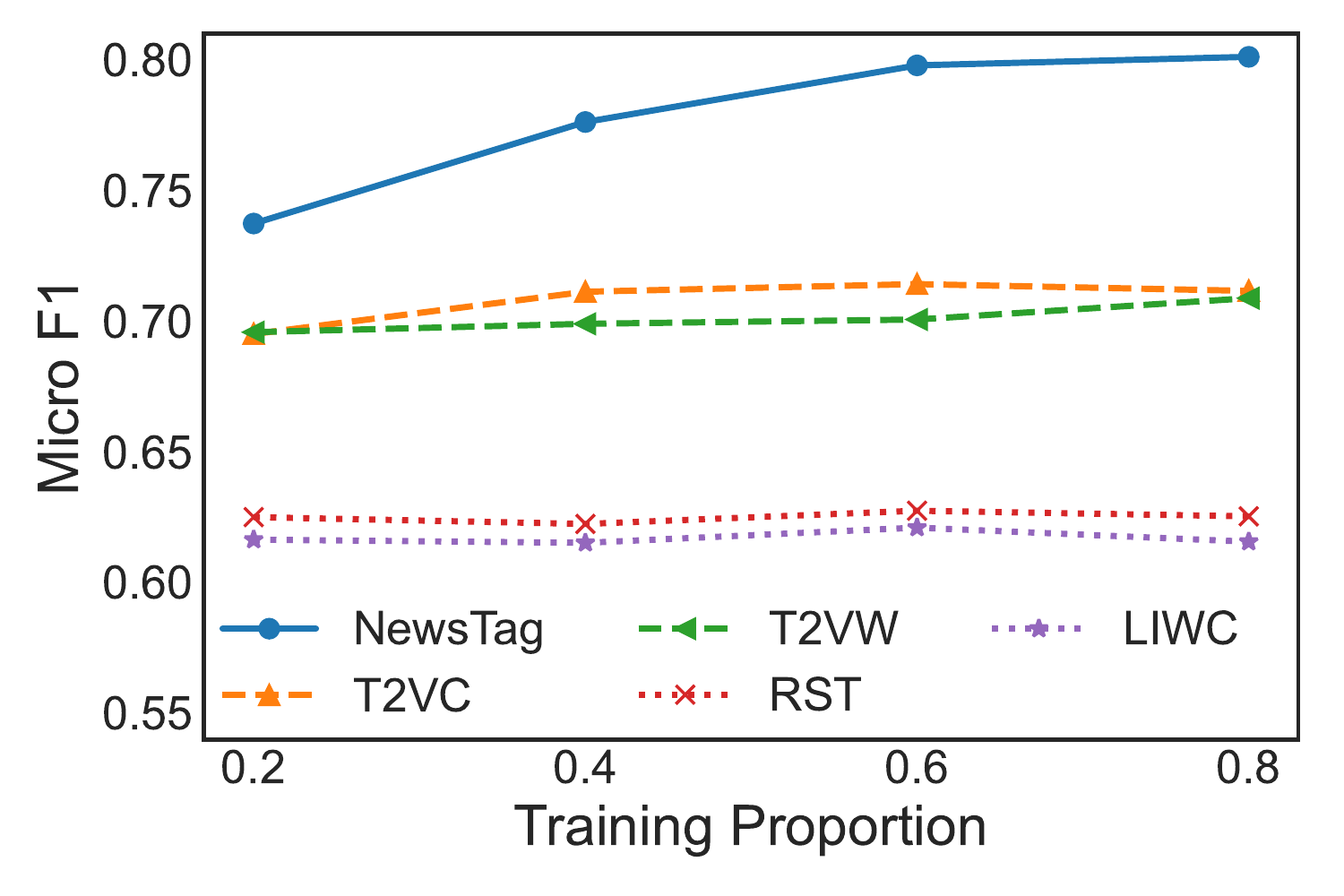}
    }
    \subfigure[MM-COVID]{
        \centering
        \includegraphics[width=0.5\columnwidth]{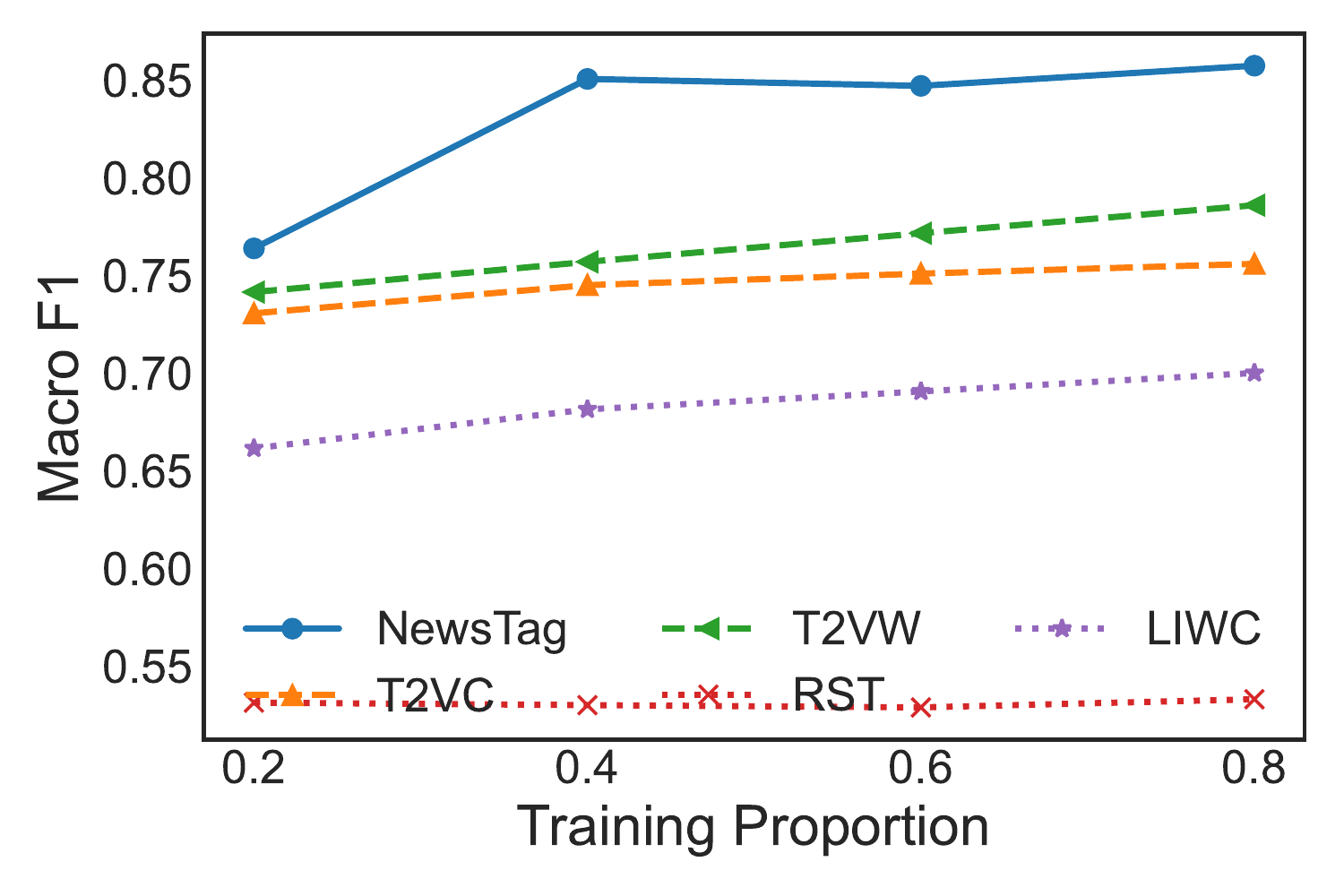}
        \includegraphics[width=0.5\columnwidth]{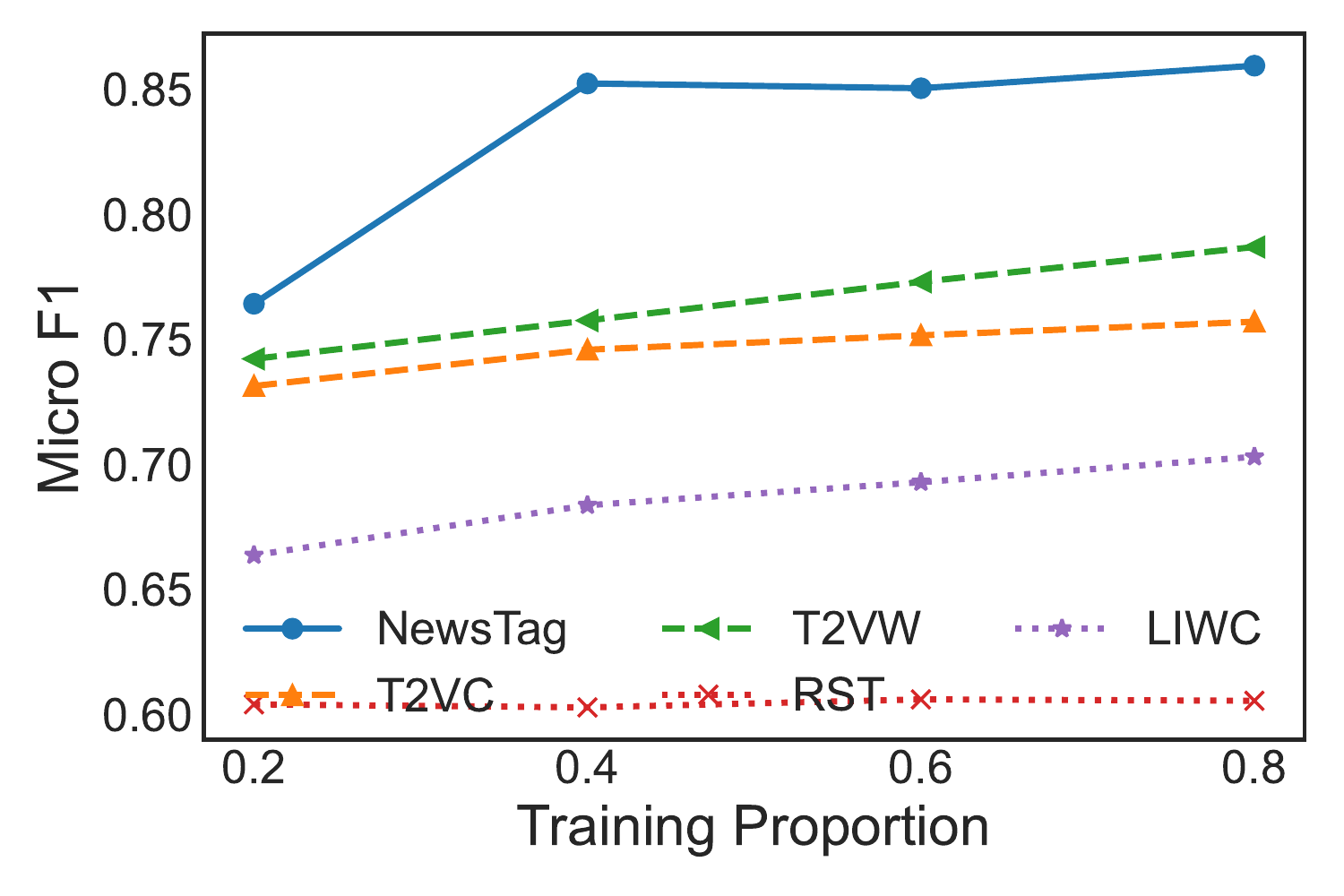}
    }
    \caption{Impact of the Volume of Training Data on Method Performance. We changed the proportion of data used for training from 20\% of the overall data to 80\%. Results indicate that the proposed method outperforms baselines.}
    \label{fig:volume}
\end{figure}

\begin{figure}
    \subfigure[ReCOVery]{
        \centering
        \includegraphics[width=0.48\columnwidth]{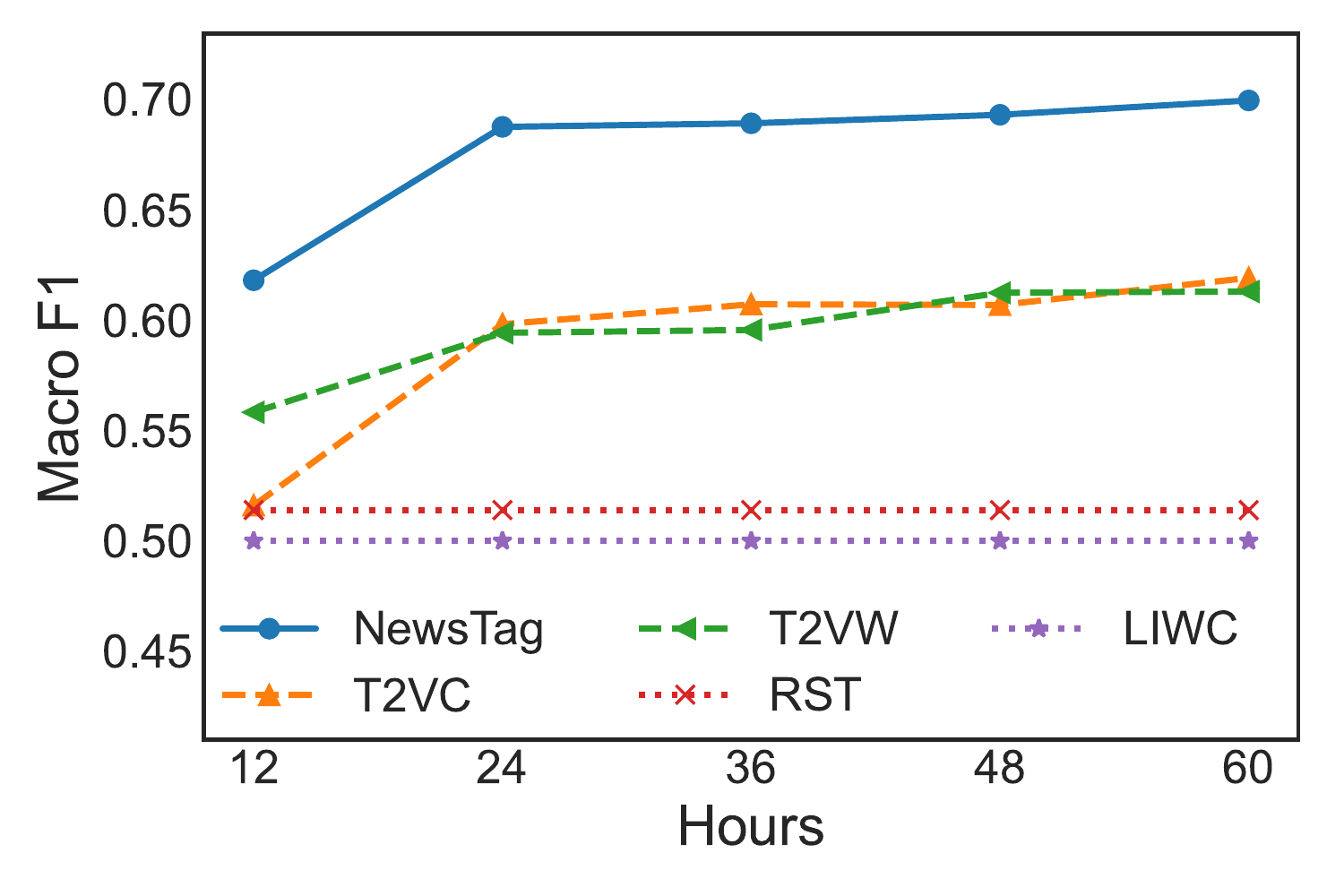}
        \includegraphics[width=0.48\columnwidth]{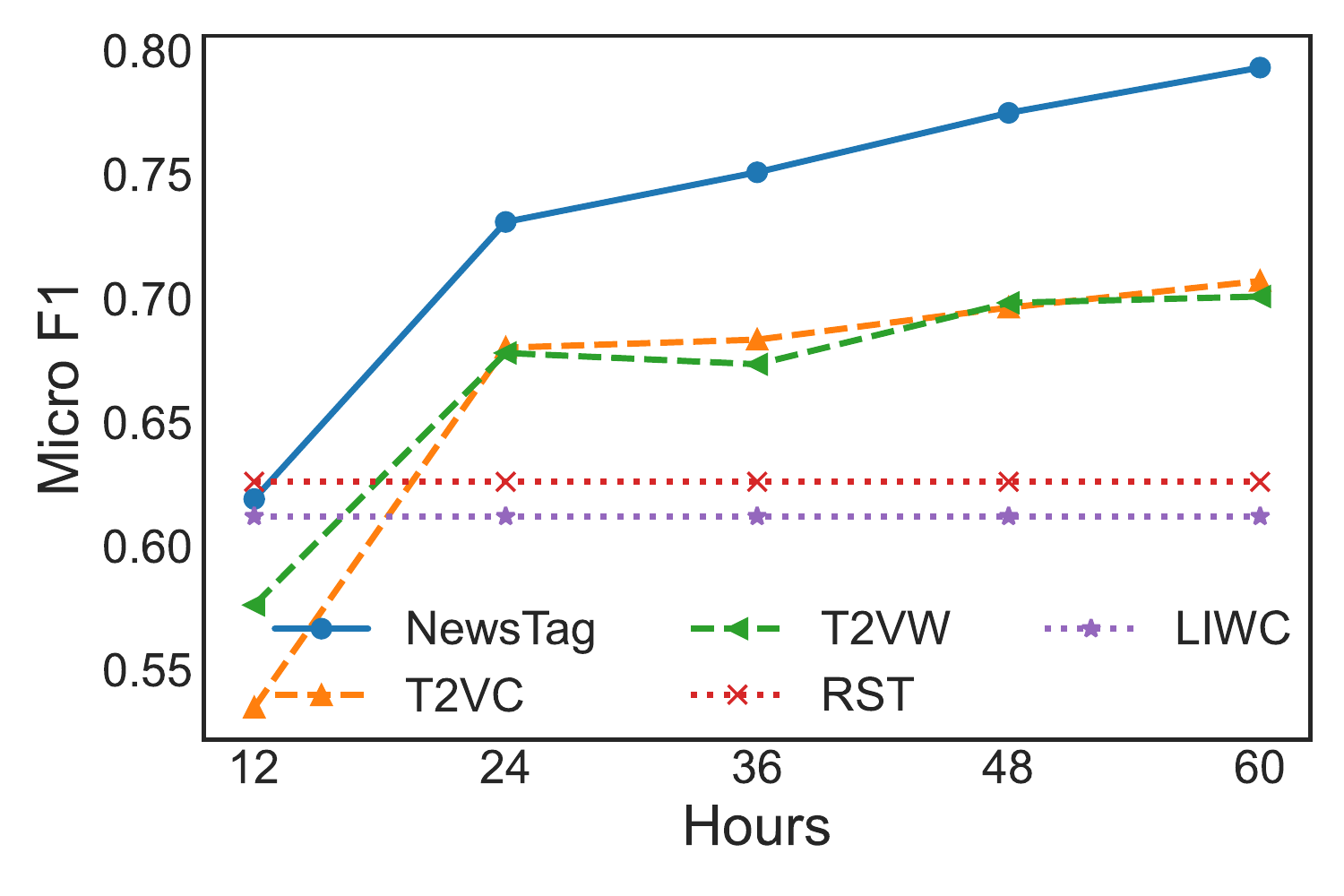}
    }
    \subfigure[MM-COVID]{
        \centering
        \includegraphics[width=0.48\columnwidth]{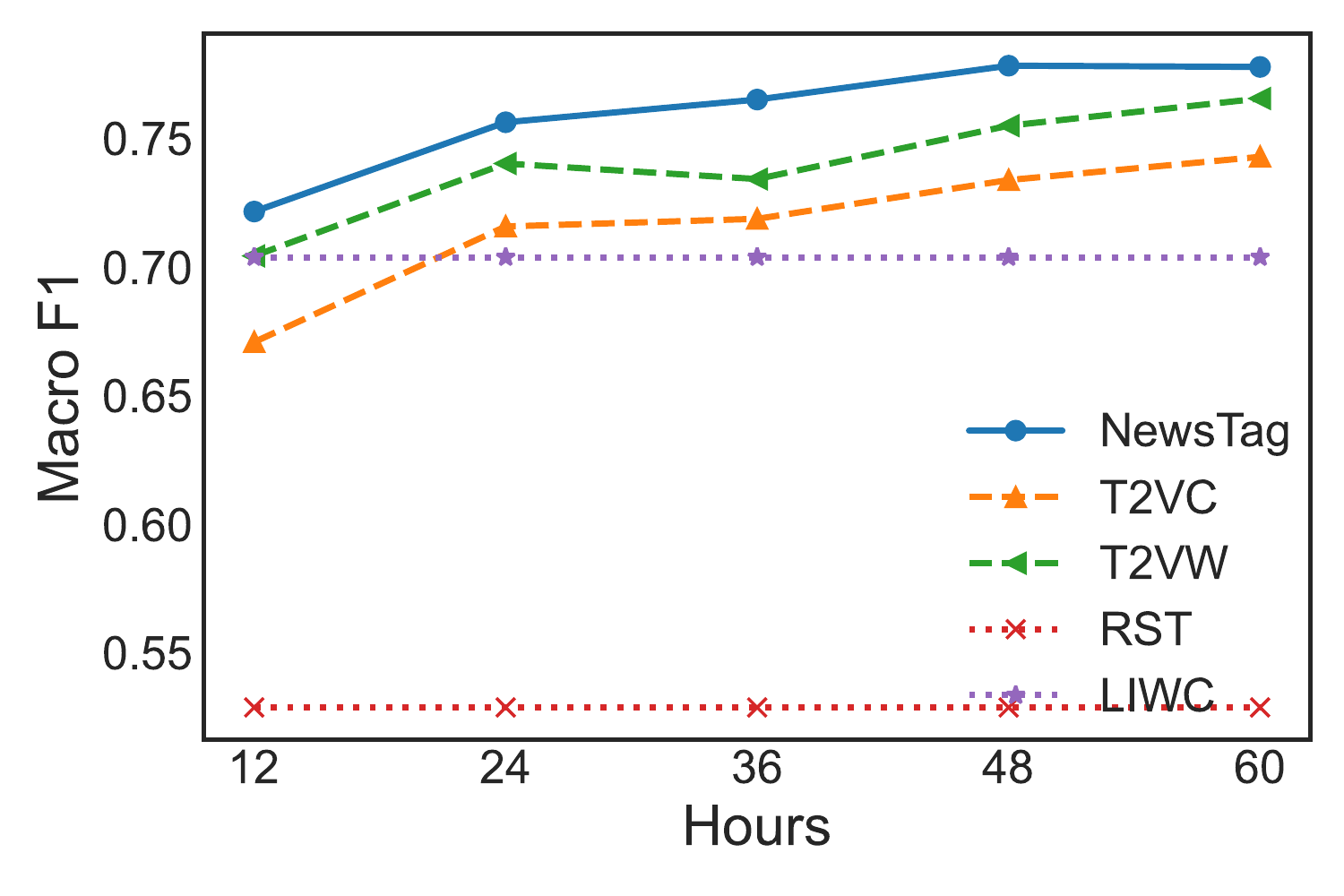}
        \includegraphics[width=0.48\columnwidth]{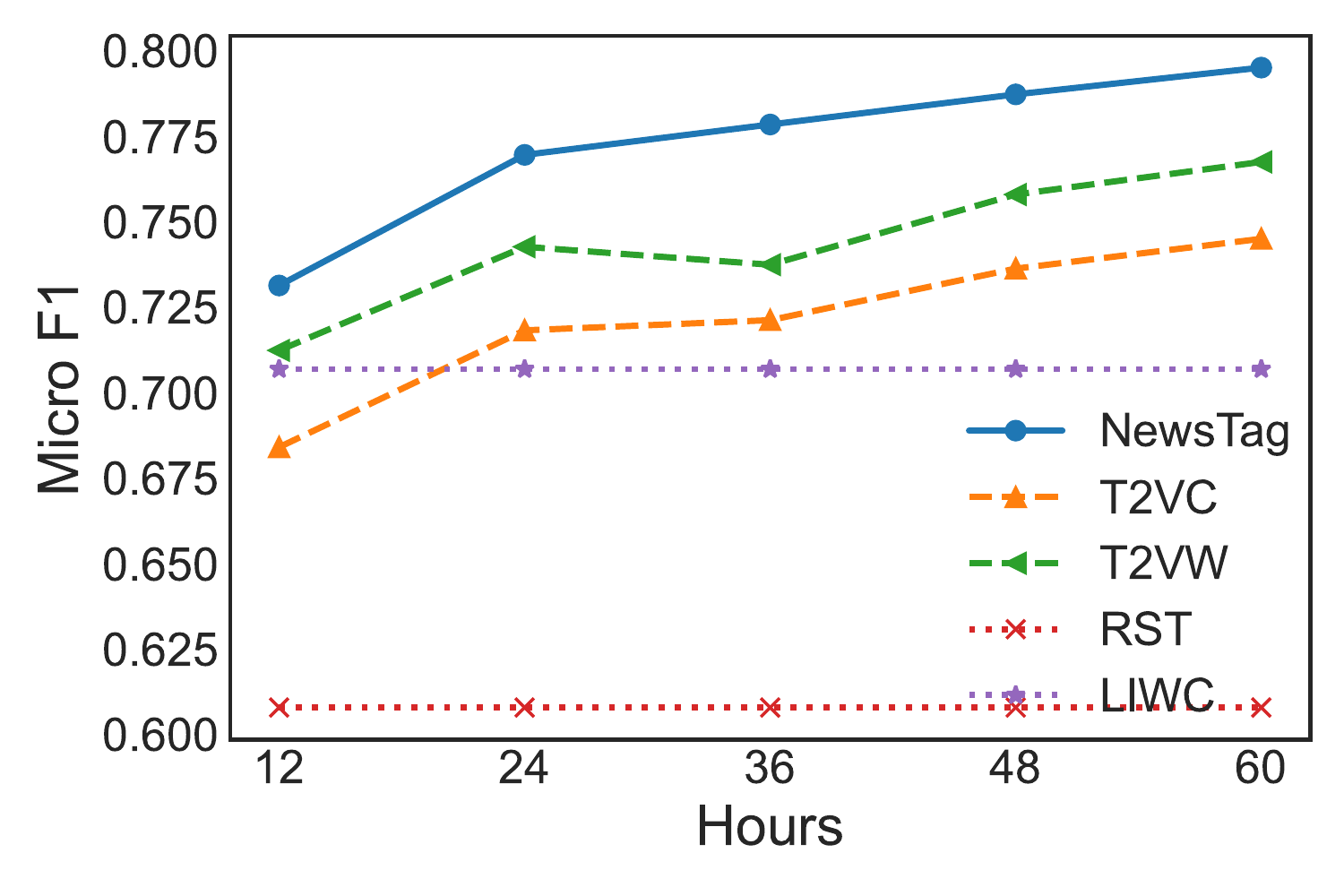}
    }
    \caption{Impact of the Limited Time of Detection on Method Performance. We vary the detecting time of methods from 12 hours after each news article or statement is published online to 60 hours. Results indicate that the proposed method can outperform baselines.}
    \label{fig:hour}
\end{figure}

\subsubsection{Fake News Early Detection}
\label{subsubsec:early_detection}

Identifying fake news at an early stage of spreading is essential as individuals tend to trust more those fake news articles or statements that have been widely spread, i.e., \textit{validity effect}~\cite{hogg2020social}. Therefore, we evaluate the method of predicting fake news by defining the ``early stage'' from two perspectives: 
\begin{itemize}
    \item The stage at which the volume of data available for learning is limited; and 
    \item The limited time a news article or statement can diffuse on social media after being published online.
\end{itemize} 

\paragraph{I. Impact of the volume of training data}
To evaluate the impact of the volume of training data on the proposed method, we changed the proportion of data used for training from 20\% of the overall data to 80\%. Figure~\ref{fig:volume} presents the corresponding method performance. Results indicate that when increasing the training proportion from 20\% to 80\%, the proposed method can achieve a macro F1 score ranging from $\sim$0.66 to $\sim$0.73 and a micro F1 score ranging from $\sim$0.74 to $\sim$0.80 based on the ReCOVery dataset; for MM-COVID data, both F1 scores range from $\sim$0.77 to $\sim$0.86. The method consistently outperforms the baseline T2VC and T2VW and significantly outperforms the baseline LIWC and RST on both datasets. Specifically, trained with 20\% ReCOVery data, the proposed method increases both F1 scores by $\sim$5\% compared to T2VC and T2VW and by more than 10\% compared to LIWC and RST. For MM-COVID data, such improvement is $\sim$3\% for T2VC and T2VW, $\sim$10\% for LIWC, and more than 15\% for RST. 

\begin{figure*}[t]
    \subfigure[ReCOVery]{
        \centering
        \includegraphics[width=\columnwidth]{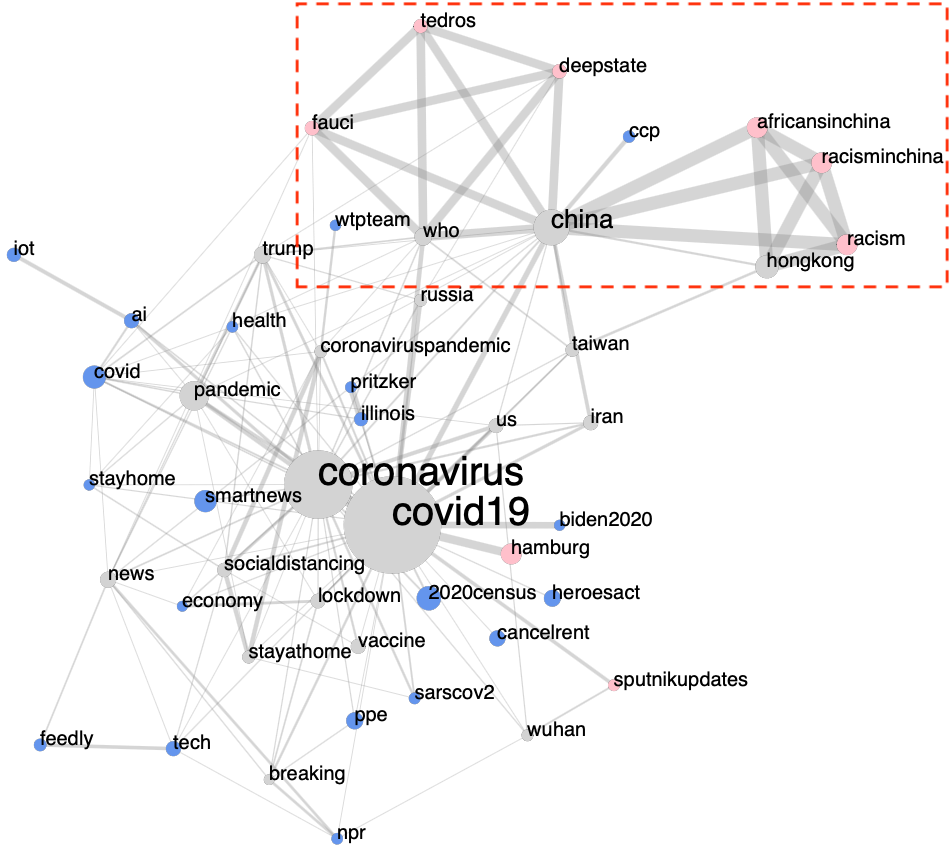}
    }
    \subfigure[MM-COVID]{
        \centering
        \includegraphics[width=\columnwidth]{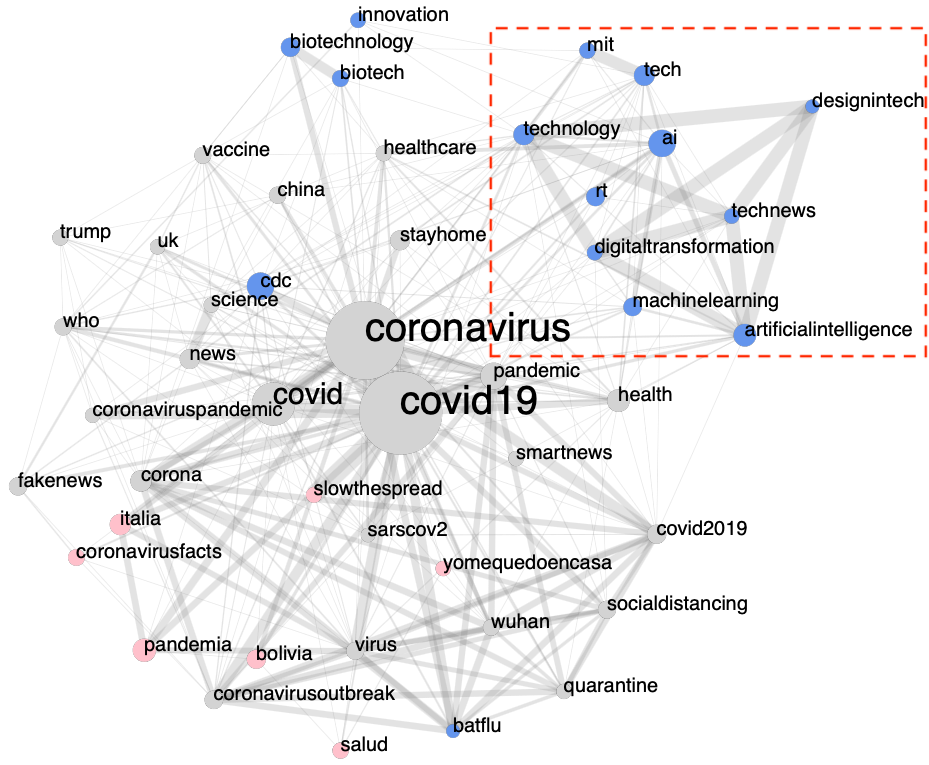}
    }
    \caption{Hashtag Graph. 
    Blue indicates that $\mathbf{c_*}_{k} \geq 0.9$, red indicates $\mathbf{c_*}_{k} \leq -0.9$, and otherwise gray. }
    \label{fig:hashtag_graph}
\end{figure*}

\paragraph{II. Impact of the limited time of detection}

We vary the detecting time of methods from 12 hours after each news article or statement is published online to using all propagation information. Figure~\ref{fig:hour} presents the corresponding method performance. Results indicate that when delaying the detecting time from 12 hours to 60 hours, the proposed method can achieve a macro F1 score ranging from $\sim$0.62 to $\sim$0.70 and a micro F1 score ranging from $\sim$0.62 to $\sim$0.79 based on the ReCOVery dataset; for MM-COVID data, both F1 scores range from $\sim$0.73 to $\sim$0.79. The method consistently outperforms propagation-based baselines (T2VC and T2VW) and outperforms content-based baselines (LIWC and RST) most of the time. 
Specifically, with 12 hours delay in the time of detection, the proposed method increases both F1 scores by $\sim$5\% compared to T2VW and $\sim$10\% compared to T2VC using ReCOVery data; compared to LIWC and RST, it enhances macro F1 score by $\sim$5\% and is comparable in micro F1 score.
For MM-COVID data, the improvement in both F1 scores is $\sim$2\% for T2VC and LIWC, $\sim$5\% for T2VW, and more than 10\% for RST. 

\subsubsection{Case Study}
\label{subsubsec:case_study}

We visualize a part, for better visualization, of the proposed hashtag graph for both datasets in Figure \ref{fig:hashtag_graph}. For each dataset, we color nodes (hashtags) based on their ``credibility'' obtained using all the data (i.e., $\mathbf{c_*}$). Note that the difference between $\mathbf{c_*}$ and $\mathbf{c_0}$ is that the former is computed by all the data while the latter is merely based on the training data. Below are two main observations that we have.

First, we observe a typical subgraph from both hashtag graphs (see the marked upper-right area of Figure~\ref{fig:hashtag_graph}). These hashtags within the subgraph are connected with the most significant weights, i.e., they are closely related. Meanwhile, these hashtags all have a relatively consistent ``credibility'' score. For MM-COVID data, most of them share a remarkably high credibility score ($\mathbf{c_*}_{k} \geq 0.9$), i.e., they are primarily involved in the spread of true news. For ReCOVery data, they all have an extremely low credibility score ($\mathbf{c_*}_{k} \leq -0.9$), i.e., they are mostly involved in the spread of fake news. This discovery validates our assumption when defining the cost function that hashtags closely related share a similar ``credibility'' score.

Furthermore, we observe that hashtags related to COVID-19 conspiracy theories often exhibit actual ($\mathbf{c_*}$) and estimated ($\mathbf{\hat{c}}$) low ``credibility'' scores (see Table \ref{tab:conspiracy_theory}), i.e., they are mainly involved in the spread of fake news. This observation further validates the approach taken by our method. 

\section{Related Work}
\label{sec:review}

Current methods of evaluating news credibility can be generally grouped into two categories: (I) content-based and (II) propagation-based methods.

\paragraph{I. Content-based methods}
Within a traditional machine learning framework, content-based methods manually extract a group of features from the news content to predict news credibility. Such features can be linguistic (e.g., personal pronouns), psychological (e.g., emotions), and statistical features (e.g., the number of words) \cite{perez2018automatic,potthast2018stylometric,przybyla2020capturing}. They can also be discourse-level features, such as rhetorical relations among content sentences or phrases~\cite{karimi2019learning,rubin2015towards}. With the rapid development of neural networks, increasing deep models have emerged to learn the representation of news content and performed well in predicting fake news~\cite{zhang2020fakedetector,rashkin2017truth,bal2020analysing,wang2018eann}. However, these methods often demand massive training data, while annotating news credibility is time-consuming and requires domain knowledge. 
Content-based methods have the natural advantage of detecting fake news before the news article or statement has been propagated on social networks. However, such methods can be vulnerable to malicious news writers' manipulation of their writing style, which motivates the development of propagation-based methods.

\begin{table}[t]
    \centering
    \captionof{table}{Hashtags Related to Conspiracy Theories.}
    \label{tab:conspiracy_theory}
    \begin{tabular}{clrr}
    \toprule[1pt]
    \textbf{Topic} & \textbf{Hashtag} & $\mathbf{c_*}$ & $\mathbf{\hat{c}}$\protect\footnotemark[2] \\ \midrule[0.5pt]
    {Vaccine} & \#saynotovaccine & -1.0 & -1.0 \\
     & \#nomorevaccines & -1.0 & -1.0 \\
     & \#novaccineforme & -1.0 & -0.6 \\
    {5G} & \#5gdeath & -1.0 & -0.6 \\ 
    {Deep State} & \#deepstate & -0.9 & -0.7 \\ 
    {Bioweapon} & \#bioweapon & -1.0 & -1.0 \\ 
    {Plandemic} & \#plandemic & -1.0 & -1.0 \\ 
    \bottomrule[1pt]
    \end{tabular}
\end{table}
\footnotetext[2]{Note that the entry value of $\mathbf{\hat{c}}$ is shrank by $\mu$. $\mathbf{\hat{c}}\in [-1,1]^q$ if $\mu=0$ and $\mathbf{\hat{c}}=\mathbf{0}$ if $\mu=1$. We use the best $\mu$ (0.4), and map $\mathbf{\hat{c}}$ to $[-1,1]^q$ for a better comparison with $\mathbf{c_*}$ that is $\in [-1,1]^q$ as well.}

\paragraph{II. Propagation-based methods}
Propagation-based methods predict news credibility by mining social context information. Existing approaches have developed social connections~\cite{wu2018tracing}, user profiles~\cite{cheng2021causal,shu2019role}, and temporal patterns in news spreading~\cite{vosoughi2018spread,sharma2021network,ma2018rumor,budak2019happened}. They have also exploited the available content, including user emotions and stances within posted tweets, retweets, and replies~\cite{castillo2011information,jin2016news,lukasik2019gaussian}. 
These methods have contributed significantly to news credibility evaluation, a few of which have also used hashtag information, e.g., how many hashtags in user tweets~\cite{castillo2011information}. Nevertheless, hashtags, especially their \textit{relationships}, have rarely been investigated to address the problem, whose value has been specified in Section \ref{sec:intro}.

\section{Conclusion}
\label{sec:conclusion}

We propose a language-independent semi-supervised approach for fake news detection, comprehensively exploring the direct and indirect relationships among hashtags. The method of computing indirect relations among nodes can be applied to most homogeneous graphs. Experiments are conducted on two COVID-19 news datasets, ReCOVery and MM-COVID. Results validate the value of hashtags and the proposed method's effectiveness, especially in the early detection of fake news. In our future studies, we will work on capturing the indirect relationship among \textit{heterogeneous} nodes. It allows us to extend our homogeneous graph to be heterogeneous by further integrating more information (e.g., news content) to enhance the method performance of news credibility evaluation.

\section*{Ethical Considerations}

In this work, we propose a new method to identify fake news. This work aims to contribute to society by identifying fake news early in its dissemination on social media. Such early identification can mitigate the negative impacts caused by fake news on, e.g., democracies, economies, and public health specified in the introduction. The proposed method investigates relationships among news articles (or statements) and hashtags of posts involved in their dissemination on social media, which does not require and use the data revealing user privacy. We verify the proposed method in two public datasets. Note that two datasets only release the IDs of the collected Twitter data for non-commercial research use to comply with Twitter's Terms of Service. We will release our code and data publicly once the paper is accepted by adhering to this rule.  




\bibliographystyle{ACM-Reference-Format}
\bibliography{references}


\end{document}